\title{A Simple yet Exact Analysis of the MultiQueue}
\author{Stefan Walzer}{Karlsruhe Institute of Technology}{stefan.walzer@kit.edu}{}{}
\author{Marvin Williams}{Karlsruhe Institute of Technology}{marvin.williams@kit.edu}{}{}
\authorrunning{S. Walzer and M. Williams}
\keywords{MultiQueue, concurrent data structure, stochastic process, Markov chain}
\begin{document}

\maketitle

\begin{abstract}
	The MultiQueue is a relaxed concurrent priority queue consisting of $n$ internal priority queues, where an insertion uses a random queue and a deletion considers two random queues and deletes the minimum from the one with the smaller minimum.
	The \emph{rank error} of the deletion is the number of smaller elements in the MultiQueue.

	Alistarh~et~al.\ \cite{alistarhPowerChoicePriority2017} have demonstrated in a sophisticated potential argument that the expected rank error remains bounded by $𝒪(n)$ over long sequences of deletions.

	In this paper we present a simpler analysis by identifying the stable distribution of an underlying Markov chain and with it the long-term distribution of the rank error exactly.
	Simple calculations then reveal the expected long-term rank error to be $\tfrac{5}{6}n-1+\tfrac{1}{6n}$.
	Our arguments generalize to deletion schemes where the probability to delete from a given queue depends only on the rank of the queue.
	Specifically, this includes deleting from the best of $c$ randomly selected queues for any $c>1$.

	Of independent interest might be an analysis of a related process inspired by the analysis of Alistarh~et~al.\ that involves tokens on the real number line.
	In every step, one token, selected at random with a probability depending only on its rank, jumps an $\Exp(1)$-distributed distance forward.
\end{abstract}

\clearpage
\section{Introduction}\label{sec:intro}

Priority queues maintain a set of elements from a totally ordered domain, with an \emph{insert} operation adding an element and a \emph{deleteMin} operation extracting the smallest element.
They are a fundamental building block for a wide range of applications such as task scheduling, graph algorithms, and discrete event simulation.
The parallel nature of modern computing hardware motivates the design of concurrent priority queues that allow multiple processing elements to insert and delete elements concurrently.
\emph{Strict}\footnote{in the sense of \emph{linearizability}} concurrent priority queues (e.g., \cite{shavitSkiplistbasedConcurrentPriority2000,lindenSkiplistBasedConcurrentPriority2013,calciuAdaptivePriorityQueue2014,rukundoTSLQueueEfficientLockFree2021}) suffer from poor scalability due to inherent contention on the smallest element \cite{attiyaLawsOrderExpensive2011,ellenInherentSequentialityConcurrent2012}.
To alleviate this contention, in \emph{relaxed} priority queues deletions should still preferentially extract small elements but need not always extract the minimum.
In other words, the correctness requirement is turned into a quality measure:
We speak of a \emph{rank error} of $r-1$ if the extracted element has rank $r$ among all elements currently in the priority queue.
In many scenarios, relaxed priority queues outperform strict priority queues, as the higher scalability outweighs the additional work caused by the relaxation.
They are an active field of research and a vast range of designs has been proposed\, \cite{karpRandomizedParallelAlgorithms1993,rihaniMultiQueuesSimpleRelaxed2015,wimmerLockfreeKLSMRelaxed2015,alistarhSprayListScalableRelaxed2015,sagonasContentionAvoidingConcurrent2017,williamsEngineeringMultiQueuesFast2021,postnikovaMultiqueuesCanBe2022,zhangMultiBucketQueues2024}.
\subparagraph{The MultiQueue.} The MultiQueue, initially proposed by Rihani~et~al.\ \cite{rihaniMultiQueuesSimpleRelaxed2015} and improved upon by Williams~et~al.\ \cite{williamsEngineeringMultiQueuesFast2021}, emerged as the state-of-the art relaxed priority queue.
Due to its high scalability and robust quality, the MultiQueue inspired a number of follow-up works \cite{postnikovaMultiqueuesCanBe2022,zhangMultiBucketQueues2024}.
Its design uses the \emph{power-of-two-choices} paradigm and is delightfully simple:
We use $n$ (sequential) priority queues for some fixed $n ∈ ℕ$.
Each insertion adds its element to a queue chosen uniformly at random
and each deletion picks \emph{two} queues uniformly at random and deletes from the one with the smaller minimum.
\begin{figure}[h]
	\centering
	\includegraphics[page=1]{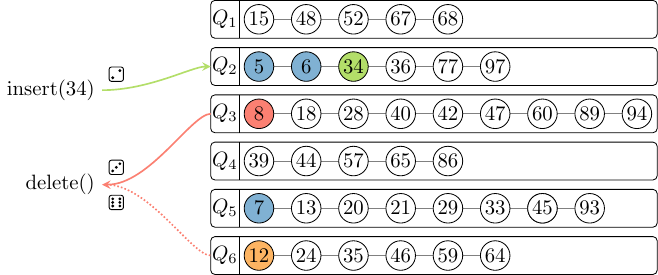}
	\caption{%
		The MultiQueue with $n = 6$ queues and some elements already inserted.
		The shown insertion picks queue $2$ to insert the element $34$ (green).
		The shown deletion picks queues $3$ and $6$ with minima $8$ (red) and $12$ (orange), and deletes the $8$ since it is smaller.
		The deletion exhibits a rank error of $3$ due to the $3$ smaller elements highlighted in blue.
	}
	\label{fig:multiqueue}
\end{figure}
\Cref{fig:multiqueue} illustrates the MultiQueue with $n = 6$ queues.
We can generalize this design to pick any number $c > 1$ of queues for deletions where even non-integer choices of $c$ make sense: We would then pick $⌊c⌋+1$ queues with probability $c-⌊c⌋$ and $⌊c⌋$ queues otherwise.
In practice, the individual queues are protected by mutual exclusion locks, and $n$ is proportional to the number of processing elements to find unlocked queues in (expected) constant time.

\subparagraph{Existing theory on the MultiQueue.}
Our theoretical understanding of the MultiQueue is still incomplete.
One obstacle is that the order of operations matters: Intuitively, deletions can cause the distribution of elements to drift apart and increase the expected rank error, while insertions of small elements can mask accumulated differences.
This suggests that the worst-case setting is when following the insertion of sufficiently many elements, only deletions occur. Like Alistarh~et~al.\ \cite{alistarhPowerChoicePriority2017}, we exclusively consider this setting.
At first glance, this process seems closely related to the classical balls-into-bins process, where balls are placed one after the other into the least loaded of two randomly chosen bins.
Famously, the difference between the highest load and the average load for $n$ bins is in $𝒪(\log \log n)$ with high probability for any number of balls.
Numerous variants of the balls-into-bins process have been proposed and studied \cite{azarBalancedAllocations1999,mitzenmacherPowerTwoRandom2001,berenbrinkBalancedAllocationsHeavily2006,peresGraphicalBalancedAllocations2015}.
However, reducing the process to a balls-into-bins process imposes multiple difficulties.
The state of the MultiQueue is not fully described by the number of elements in each queue but also involves information about the ranks of the elements.
Moreover, deleting an element from a queue can affect the ranks of elements in other queues.
Despite these challenges, Alistarh~et~al.\ \cite{alistarhPowerChoicePriority2017} managed to transfer the potential argument from the balls-into-bins analysis by Peres~et~al.\ \cite{peresGraphicalBalancedAllocations2015} to a MultiQueue analysis via an intermediate ``exponential process'' that avoids correlations between the elements in the queues.
They prove that the expected rank error is in $𝒪(n)$ and the expected worst-case rank error is in $𝒪(n\log n)$ for any number of deletions and any $c\in (1,2]$, while the rank errors diverge with the number of deletions for $c = 1$.
In follow-up work, this technique is generalized to other relaxed concurrent data structures \cite{alistarhDistributionallyLinearizableData2018} and a process where each processing element has its own priority queue and ``steals'' elements from other processing elements with some probability \cite{postnikovaMultiqueuesCanBe2022}.

\subparagraph{Contribution.}
In this paper, we present an analysis of the MultiQueue that, compared to the potential argument by Alistarh~et~al.\ \cite{alistarhPowerChoicePriority2017}, is simultaneously simpler and more precise.
We characterise the exact long-term distribution of the rank error for any $c > 1$ and any $n ∈ ℕ$.
From this distribution we can derive, for instance, that the expected rank error for $c = 2$ is $\frac 56n - 1 + \frac{1}{6n}$.
In addition to covering any $c>1$, our analysis generalizes to all deletion schemes where the probability for a queue to be selected solely depends on the rank of the queue (when ordering the queues according to their smallest element).

We achieve this by modelling the deletion process as a Markov chain and observing that its stationary distribution can be described using a sequence of $n-1$ independent geometrically distributed random variables.
Our techniques also apply to the elegant exponential process by Alistarh~et~al.\ \cite{alistarhPowerChoicePriority2017}.
Since we believe it to be of independent interest, we sketch an analysis, skipping over some formalism related to continuous probability spaces.

\section{Formal Model and Results}
\label{sec:models}
Analogous to Alistarh~et~al.\ \cite{alistarhPowerChoicePriority2017}, we analyze the MultiQueue in a simplified setting where only deletions occur.

\subparagraph{The $\bm{\sigma}$-MultiQueue.}
A $\sigma$-MultiQueue consists of $n$ priority queues and a choice distribution $\sigma$ on $[n]$.
The $n$ queues are initially populated by randomly partitioning an infinite set $\{x₁ < x₂ < x₃ < … \}$ of elements.%
\footnote{Note that every queue receives an infinite number of elements with probability $1$.}
The queues, identified with the sets of elements they contain, are denoted by $Q₁,…,Qₙ$, indexed such that $\min Q₁ < \min Q₂ < … < \min Qₙ$.
The minima are also called \emph{top-elements}.
We then perform a sequence of deletions.
Each time, we select an index $i ∈ [n]$ according to $σ$ and delete the top-element of $Q_i$.%
\footnote{We write $[n]$ as a shorthand for $\{1,\ldots,n\}$.}
Then, we relabel the queues such that their top-elements appear in ascending order again.
Let $(Q₁^{(s)},…,Qₙ^{(s)})$ be the sequence of queues after $s$ deletions and $r_i^{(s)}$ the rank of the top-element $\min Q_i^{(s)}$ among $Q₁^{(s)} ∪ … ∪ Qₙ^{(s)}$, for any $s ∈ ℕ$ and $i ∈ [n]$.

Intuitively, $σ$ should be biased towards smaller values of $i$, i.e.,\ towards selecting queues with smaller top-elements, to ensure that the rank error does not diverge over time.
Using the notation $σ_{i} := \Pr_{i^* \sim σ}[i = i^*]$ and $σ_{\upto i} := \Pr_{i^* \sim σ}[i ≤ i^*]$, the formal requirement for $σ$ turns out as:
\begin{equation}
	\forall i ∈ [n-1]: σ_{\upto i} > \tfrac{i}{n} \tag{$☆$}
\end{equation}
If $σ$ does not satisfy $(☆)$, the MultiQueue deletes too frequently from ``bad'' queues (i.e., with large top-elements) and the gap between ``good'' and ``bad'' queues increases over time.
We prove a corresponding formal claim in \cref{lem:non-convergence}.

\subparagraph{Surprisingly independent random variables.}
One might think that the ranks $r₁^{(s)} < … < rₙ^{(s)}$ are correlated in complex ways.
While they \emph{are} correlated, they effectively arise as the prefix sum of $n-1$ \emph{independent} random variables.
More precisely, our main theorem draws attention to the differences between the ranks of consecutive top-elements.

\begin{theorem}
	\label{thm:main}
	Let $(r₁^{(s)}, …, rₙ^{(s)})$ denote the ranks of the top-elements of a $\sigma$-MultiQueue with $\sigma$ satisfying $(☆)$ after $s$ deletions.
	Then $(r₁^{(s)}, …, rₙ^{(s)})$ converges in distribution to a sequence $(r₁, …, rₙ)$ of random variables where
	\[ r_i = 1+ \sum_{j = 1}^{i-1} δ_j \text{ for $i ∈ [n]$}
		\text{ with } δ_i \sim \Geom₁\Big(1-\frac{i}{nσ_{\upto i}}\Big) \text{ for $i ∈ [n-1]$},
	\]
	where $\Geom₁(p)$ denotes the geometric distribution of the number of Bernoulli trials with success probability $p$ until (and including) the first success.
\end{theorem}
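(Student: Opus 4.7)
My plan is to view $(r_1^{(s)}, \ldots, r_n^{(s)})$ as a deterministic function of a Markov chain on the \emph{gap vector} $(\delta_1^{(s)}, \ldots, \delta_{n-1}^{(s)}) \in \mathbb{N}_{\ge 1}^{n-1}$ with $\delta_i^{(s)} := r_{i+1}^{(s)} - r_i^{(s)}$, identify the stated product-of-geometrics distribution as its unique stationary measure, and then conclude by a standard ergodic theorem.

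\textbf{Reducing to the gap chain.} The reduction rests on a structural invariant proved by induction on $s$: conditional on the top-ranks $(r_1, \ldots, r_n)$, each non-top element at rank $r$ with $r_i < r < r_{i+1}$ (respectively $r > r_n$) is independently and uniformly distributed among the queues $Q_1, \ldots, Q_i$ (respectively all $n$ queues). This holds at $s = 0$ because the initial partition is iid uniform; the inductive step is a bookkeeping exercise tracking queue labels through one deletion and the subsequent relabeling. Granting the invariant, the distribution of the next gap vector depends only on the current one, so the gap vector is a time-homogeneous Markov chain. One step does the following: pick $k \sim \sigma$; the original rank $T > r_k$ of the new top of (former) $Q_k$ is the smallest ``hit'' when scanning $r_k + 1, r_k + 2, \ldots$, where each rank in bucket $i \ge k$ is a hit independently with probability $1/i$; then remove the marker at $r_k$, insert one at $T$, and shift ranks above $r_k$ down by $1$ to read off the new gap vector.

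\textbf{Stationarity and convergence.} The technical core is to check that $\pi(a_1, \ldots, a_{n-1}) := \prod_{i=1}^{n-1} p_i (1 - p_i)^{a_i - 1}$ with $p_i := 1 - i/(n \sigma_{\upto i})$ is invariant under this transition. Because $\pi$ factorises, I would work with the probability generating function $G(z_1, \ldots, z_{n-1}) := \prod_i p_i z_i / (1 - (1 - p_i) z_i)$ and show that one application of the transition preserves it. The computation naturally decomposes by the deletion target $k$ and the landing bucket $J$ determined by $T \in (r_{k+J}, r_{k+J+1})$; the identity $1 - p_i = i/(n \sigma_{\upto i})$ is what should make the contributions from all $(k, J)$ telescope back into $G$. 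For the limit, the chain is irreducible (any state reaches and is reached from $(1, \ldots, 1)$ by finite sequences of deletions, combining deletions from $Q_1$ with the full-support scan for $T$) and aperiodic ($(1, \ldots, 1)$ self-loops: deleting $Q_1$ there and hitting the first open slot returns the chain to $(1, \ldots, 1)$ with probability $\sigma_1/n > 0$), so the Markov convergence theorem completes the argument.

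\textbf{Main obstacle.} The hard part is the stationarity verification. When the new top $T$ leapfrogs several existing tops, the transition couples multiple coordinates of the gap vector, yet the target $\pi$ factorises. The algebraic identity $1 - p_i = i/(n \sigma_{\upto i})$ is chosen exactly so that these couplings cancel, but reducing the sum over $(k, J)$ to this identity is the main calculation one has to perform.
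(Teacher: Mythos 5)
Your plan follows the same high-level route as the paper: model the gap vector as a Markov chain on $\mathbb{N}^{n-1}$, identify the product-of-geometrics measure as its unique stationary distribution, and invoke the Markov convergence theorem. The structural invariant you state (non-top elements uniformly distributed among the admissible queues conditional on the top ranks) is exactly the paper's ``veil of probability'' setup, and your stationary measure $\pi(a_1,\dots,a_{n-1}) = \prod_i p_i(1-p_i)^{a_i-1}$ is the paper's up to the harmless shift from $\Geom$ to $\Geom_1$. Your irreducibility and aperiodicity arguments also parallel the paper's (and your self-loop via $\sigma_1$ is slightly more robust, since $(☆)$ guarantees $\sigma_1 = \sigma_{\upto 1} > 1/n > 0$ whereas the paper's self-loop via ball $n$ implicitly assumes $\sigma_n > 0$).

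The gap is that you leave the stationarity check --- which is the entire technical content of the theorem --- as ``the main calculation one has to perform,'' with only a pointer to a generating-function decomposition over the deletion target $k$ and landing bucket $J$. The paper's key idea, which you have not found, is to avoid summing over the landing bucket altogether. It introduces \emph{transitional states} $(\dvec,i)$ --- intermediate configurations in which ball $i$ is actively scanning --- and proves by a short induction (\cref{lem:geom-probability}) that, starting from $\pi$, each transitional state $(\dvec,i)$ is visited with probability exactly $\pi(\dvec)\cdot\sigma_{\upto i}$. Stationarity then drops out in two lines: $\Pr[\dend=\dvec] = \sum_{i=1}^n \nu(\dvec+\unit{i},i)\cdot\tfrac1i = \sum_{i=1}^n \pi(\dvec)\cdot\tfrac1n = \pi(\dvec)$, using \cref{obs:shifting-geometric} in the middle step. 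The telescoping you hope for via $1-p_i = i/(n\sigma_{\upto i})$ is indeed what powers the argument, but it is orchestrated through a forward recursion on transitional states rather than a direct sum over $(k,J)$; the latter would force you to untangle, across all buckets at once, contributions that the transitional-state bookkeeping keeps local. Until you discover something equivalent to this mechanism, the proposal remains a sketch of the right framework rather than a proof.
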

From the distribution of the ranks of the top-elements given by \cref{thm:main} it is straightforward to derive the rank error distribution.
\begin{corollary}
	\label{cor:expected-rank-error}
	Let $E^{(s)}$ denote the rank error exhibited by the deletion in step $s$ in the $\sigma$-MultiQueue with $\sigma$ satisfying $(☆)$.
	Then $E^{(s)}$ converges in distribution to the random variable $E$ where
	\[
		E = r_{i^*}-1 \text{ with } i^*\sim \sigma \text{ and } r_i \text{ as in \cref{thm:main}}.
	\]
	The expected rank error is (in the long run)
	\[
		\mathbb{E}[E] = \sum_{i=1}^{n-1} \frac{σ_{\upto i}(1-σ_{\upto i})}{σ_{\upto i}-\frac{i}{n}}.
	\]
\end{corollary}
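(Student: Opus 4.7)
The plan is to derive \cref{cor:expected-rank-error} from \cref{thm:main} purely by linearity of expectation; no new probabilistic input is required.

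For the convergence statement, I would first observe that the $s$-th deletion draws an index $i^{(s)}\sim\sigma$ independently of the current configuration and then removes the top-element of $Q_{i^{(s)}}^{(s-1)}$, whose rank equals $r_{i^{(s)}}^{(s-1)}$. Hence $E^{(s)} = r_{i^{(s)}}^{(s-1)} - 1$. By \cref{thm:main} and the independence of $i^{(s)}$ from the state, the pair $\bigl((r_1^{(s-1)},\ldots,r_n^{(s-1)}),\,i^{(s)}\bigr)$ converges in distribution to the independent pair $\bigl((r_1,\ldots,r_n),\,i^*\bigr)$ with $i^*\sim\sigma$. The continuous mapping theorem applied to $(\vec r, i)\mapsto r_i - 1$ then yields $E^{(s)} \Rightarrow E := r_{i^*} - 1$, as claimed.

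For the expectation, I would compute by applying linearity in two stages. Conditioning on $i^*$ and using the decomposition $r_i = 1+\sum_{j=1}^{i-1}\delta_j$ from \cref{thm:main},
\[
\mathbb{E}[E] \;=\; \sum_{i=1}^{n}\sigma_i\bigl(\mathbb{E}[r_i]-1\bigr) \;=\; \sum_{i=1}^{n}\sigma_i\sum_{j=1}^{i-1}\mathbb{E}[\delta_j].
\]
Swapping the two finite sums converts the outer tail sum $\sum_{i=j+1}^{n}\sigma_i$ into $1-\sigma_{\upto j}$, producing $\mathbb{E}[E] = \sum_{j=1}^{n-1}(1-\sigma_{\upto j})\,\mathbb{E}[\delta_j]$. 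Using the standard identity $\mathbb{E}[\Geom_1(p)] = 1/p$, I would substitute $\mathbb{E}[\delta_j] = \tfrac{n\sigma_{\upto j}}{n\sigma_{\upto j}-j} = \tfrac{\sigma_{\upto j}}{\sigma_{\upto j}-j/n}$, which is well-defined precisely because $(☆)$ forces $\sigma_{\upto j} > j/n$ for every $j\in[n-1]$; plugging this in produces the claimed closed form.

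There is no significant obstacle here: once \cref{thm:main} is available, the corollary reduces to linearity of expectation, one swap of finite summations, and the mean of a geometric random variable. The only subtlety worth explicitly flagging is the need to combine the rank vector with the independently drawn deletion index $i^*$ when passing to the limit, which is cleanly dispatched by the continuous mapping theorem as above.
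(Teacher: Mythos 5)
Your proposal is correct and follows essentially the same route as the paper: decompose $r_{i^*}-1$ via $\delta_j$, swap the finite sums to produce $1-\sigma_{\upto j}$, and use $\mathbb{E}[\Geom_1(p)]=1/p$. The only difference is cosmetic — you spell out the distributional-limit step via the continuous mapping theorem, whereas the paper treats it as immediate.
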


\subparagraph{Application to the $\bm{c}$-MultiQueue.}
For $c > 1$ we define the $c$-MultiQueue to be the $\sigma$-MultiQueue where $σ$ is the distribution that corresponds to picking the best out of $c$ randomly selected queues as described in \cref{sec:intro}.
For instance, the probability to select one of the first $i$ queues with $c = 2$ is $σ_{\upto i} = 1-(1-\frac in)²$. Note that ($☆$) is satisfied (see \cref{lem:best-of-c-satisfies-☆}).
We can then derive several useful quantities from \cref{thm:main}, including the expected rank error.
\begin{restatable}{theorem}{rankerrorthm}
	\label{thm:expected-rank-error}
	Consider the expectation $𝔼[E]$ of the rank error $E = r_{i^*}-1$ of the $c$-MultiQueue in the long run (i.e., after convergence).
	\begin{enumerate}[(i)]
		• For $c = 2$ we have
		$\displaystyle 𝔼[E] = \tfrac{5}{6}n-1+\tfrac{1}{6n} = \tfrac{5}{6}n - Θ(1)$.
		• For $c=1+ε$ with $ε ∈ (0,1)$, we have
		$\displaystyle 𝔼[E] = (\tfrac{1}{\epsilon}-\tfrac{\epsilon}{6})n - \tfrac{1}{\epsilon} + \tfrac{\epsilon}{6n} = (\tfrac{1}{\epsilon}-\tfrac{\epsilon}{6})n - 𝒪(1/ε)$.
		• For any $c > 1$ we have
		$\displaystyle n\cdot \int_0^1 f_c(x) \,dx - \tfrac{c}{c-1} ≤ 𝔼[E] ≤ n\cdot \int_0^1 f_c(x) \,dx$,\\
		where $f_c : (0,1) → ℝ$ is defined as follows using $ε := c - ⌊c⌋ ∈ [0,1)$
		\[ f_c(x) = x^{⌊c⌋-1}(1-ε+εx)\cdot \frac{1-x^{⌊c⌋}(1-ε+εx)}{1-x^{⌊c⌋-1}(1-ε+εx)}.\]
		• Cruder but simpler bounds for any $c ≥ 2$ are:
		$\tfrac{n}{⌈c⌉}-\tfrac{c}{c-1} ≤ 𝔼[E] ≤ \tfrac{n}{⌊c⌋-1}.$
	\end{enumerate}
\end{restatable}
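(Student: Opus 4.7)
The plan is to reduce the theorem to a clean sum of evaluations of $f_c$ and then bound that sum by integrals. Start from the expected rank error
\[\mathbb E[E] \;=\; \sum_{i=1}^{n-1}\frac{\sigma_{\upto i}(1-\sigma_{\upto i})}{\sigma_{\upto i}-\tfrac{i}{n}}\]
given by \Cref{cor:expected-rank-error}. For the $c$-MultiQueue, taking the $(1-\epsilon)/\epsilon$-mixture of picking $\lfloor c\rfloor$ and $\lfloor c\rfloor+1$ queues yields
\[1-\sigma_{\upto i} \;=\; (1-\epsilon)\bigl(1-\tfrac{i}{n}\bigr)^{\lfloor c\rfloor} + \epsilon\bigl(1-\tfrac{i}{n}\bigr)^{\lfloor c\rfloor+1} \;=\; x^{\lfloor c\rfloor}(1-\epsilon+\epsilon x),\]
where I substitute $x := 1-\tfrac{i}{n}$. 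Writing $g(x) := x^{\lfloor c\rfloor}(1-\epsilon+\epsilon x)$, the summand becomes $\tfrac{(1-g(x))g(x)}{x-g(x)}$, which after factoring out $x$ in numerator and denominator is exactly $f_c(x)$. Since $\{1-i/n : 1\le i\le n-1\} = \{k/n : 1\le k\le n-1\}$, this gives the master identity
\[\mathbb E[E] \;=\; \sum_{k=1}^{n-1} f_c\!\left(\tfrac{k}{n}\right). \tag{$\ast$}\]

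For parts (i) and (ii), I would just unfold $f_c$ as a polynomial and sum in closed form. For $c=2$ one has $\epsilon=0$, $\lfloor c\rfloor=2$, so the $(1-x^2)/(1-x)=1+x$ simplification gives $f_2(x)=x+x^2$, and evaluating $\tfrac{1}{n}\sum k + \tfrac{1}{n^2}\sum k^2$ produces $\tfrac{5n}{6}-1+\tfrac{1}{6n}$. For $c=1+\epsilon$ one has $\lfloor c\rfloor=1$, and the factor $(1-x^{\lfloor c\rfloor})=(1-x)$ cancels a $(1-x)$ in the denominator to give the quadratic $f_c(x)=\tfrac{1-\epsilon}{\epsilon}+(2-\epsilon)x+\epsilon x^2$; again summing termwise yields the stated formula.

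For part (iii), I would compare $(\ast)$ with $n\int_0^1 f_c(x)\,dx$ using standard monotone Riemann-sum bounds; the upper bound $\sum_k f_c(k/n) \le n\int_{1/n}^1 f_c \le n\int_0^1 f_c$ and the lower bound $\sum_k f_c(k/n) \ge n\int_0^{1-1/n} f_c = n\int_0^1 f_c - n\int_{1-1/n}^1 f_c$ both require $f_c$ to be nondecreasing on $(0,1)$. For the missing tail, L'Hôpital at $x=1$ (using $g'(1)=c$ and $(g/x)'(1)=c-1$) gives $f_c(1)=\tfrac{c}{c-1}$, so the tail integral is at most $\tfrac{c}{c-1}\cdot\tfrac{1}{n}$ and delivers exactly the claimed additive loss of $\tfrac{c}{c-1}$.

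For the crude bounds in (iv), I would estimate $\int_0^1 f_c$ directly. Writing $f_c(x)=h(x)\cdot\tfrac{1-xh(x)}{1-h(x)}$ with $h(x)=x^{\lfloor c\rfloor-1}(1-\epsilon+\epsilon x)$ shows $f_c\ge h$ since $x\le 1$; combined with the elementary inequality $1-\epsilon+\epsilon x \ge x$ on $[0,1]$, this gives $f_c(x)\ge x^{\lceil c\rceil -1}$ and hence the lower bound $\int_0^1 f_c \ge 1/\lceil c\rceil$. The upper bound $\int_0^1 f_c \le 1/(\lfloor c\rfloor -1)$ follows by the decomposition $f_c = h + \tfrac{(1-x)h^2}{1-h}$ and estimating the second summand via AM-GM applied to $1+x+\dots+x^{\lfloor c\rfloor-2} \ge (\lfloor c\rfloor-1)x^{(\lfloor c\rfloor-2)/2}$ (with an analogous estimate in the non-integer case handled by the fact that $(1-\epsilon+\epsilon x)$ is bounded below by $x$).

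The main obstacle I expect is establishing the monotonicity of $f_c$ on $(0,1)$ for arbitrary $c>1$ rigorously. For integer $c$ the closed form $x^{c-1}(1+x+\dots+x^{c-1})/(1+x+\dots+x^{c-2})$ makes it manifest, and for $c\in(1,2)$ the quadratic form already computed makes it immediate; the general case needs a derivative computation of the rational function $\tfrac{g(1-g)}{x-g}$, which I would carry out by showing the logarithmic derivative is positive on $(0,1)$. Everything else reduces to routine polynomial arithmetic and standard Riemann-sum control.
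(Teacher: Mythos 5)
Your overall plan matches the paper's proof closely: you derive the same master identity $\mathbb{E}[E]=\sum_{k=1}^{n-1}f_c(k/n)$ via the substitutions $x=1-i/n$ and cancellation of the common factor, specialize $f_c$ to a quadratic polynomial for (i) and (ii), and bound the sum by $n\int_0^1 f_c$ using monotonicity of $f_c$ for (iii). The calculations you sketch for (i), (ii), the endpoint value $f_c(1)=c/(c-1)$, and the Riemann-sum bookkeeping are all correct.

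The genuine gap is the monotonicity of $f_c$ on $(0,1)$ for general non-integer $c>1$. You correctly flag it as the ``main obstacle'' but only outline a plan (``I would carry out \dots showing the logarithmic derivative is positive''), and it is not clear that brute-forcing the derivative of $h\cdot\frac{1-xh}{1-h}$ is manageable: the three terms of the logarithmic derivative do not have an obviously positive sum, and the $0/0$ behaviour at $x=1$ makes sign analysis delicate. The paper avoids calculus entirely by rewriting the fractional factor as
\[
g_c(x)\;=\;\frac{1-x^{\lfloor c\rfloor}(1-\epsilon+\epsilon x)}{1-x^{\lfloor c\rfloor-1}(1-\epsilon+\epsilon x)}\;=\;1+\frac{1-\epsilon+\epsilon x}{\sum_{i=1}^{\lfloor c\rfloor-1}x^{-i}+\epsilon},
\]
from which both the monotonicity in $x$ (numerator increasing, denominator decreasing) and the bound $1\le g_c\le c/(c-1)$ are immediate, and from which the monotonicity of $f_c$ in $c$ (needed for rounding in (iv)) also follows. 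You should either prove this rearrangement or find another way to close the monotonicity gap; as written, (iii) and (iv) do not yet stand.

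Your treatment of (iv) is a genuinely different and valid route: instead of the paper's use of $1\le g_c\le c/(c-1)$, you decompose $f_c = h + (1-x)h^2/(1-h)$ and apply AM--GM to the geometric sum $1+x+\cdots+x^{\lfloor c\rfloor-2}\ge(\lfloor c\rfloor-1)x^{(\lfloor c\rfloor-2)/2}$. I checked that this gives $\int_0^1 f_c \le \tfrac{1}{\lfloor c\rfloor}+\tfrac{2}{3\lfloor c\rfloor(\lfloor c\rfloor-1)}\le\tfrac{1}{\lfloor c\rfloor-1}$, so it works and is in fact slightly tighter; the paper's approach is simpler because it recycles the bound on $g_c$ already established for (iii). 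Your lower bound $f_c\ge x^{\lceil c\rceil-1}$ via $1-\epsilon+\epsilon x\ge x$ is correct and equivalent in spirit to the paper's.
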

\begin{figure}[h]
	\centering
	\includegraphics[page=1]{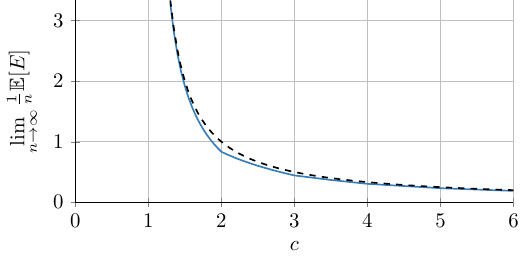}
	\caption{%
		For large $n$, the $c$-MultiQueue has an expected rank error of $𝔼[E] = n·\int₀¹f_c(x)\,dx + o(n)$ where $f_c(x)$ is defined in \cref{thm:expected-rank-error}. The solid line shows $c ↦ \int₀¹f_c(x)\,dx$ and hence the constant in front of the leading term. The dashed line $c ↦ \frac{1}{c-1}$ is an asymptote for both $c\to1$ and $c\to\infty$.
		%
	}
    \label{fig:rank-error-c}
\end{figure}
\Cref{fig:rank-error-c} plots the expected asymptotic rank error per queue depending on $c$, and an approximation $\tfrac{1}{c-1}$.
\noindent We also give concentration bounds for the rank error in \cref{cor:rank-error-concentration}.

\subparagraph{The Exponential-Jump Process.}
As an intermediate step in their analysis, Alistarh~et~al.\ \cite{alistarhPowerChoicePriority2017} introduce the ``exponential process'', where new top-elements are not given by the current state but generated by adding an exponential random variable to the current top-element.
We reformulate this process as the equivalent \emph{exponential-jump process} (EJP) as follows.
The EJP involves $n$ tokens on the real number line and a distribution $σ$ on $[n]$.
In every step, we sample $i^* \sim σ$ and $X \sim \Exp(1)$.
We then identify the $i^*$th token from the left and move it a distance of $X$ to the right.
More formally, the state of the process is given by the sequence $t₁ < … < tₙ$ of positions of the tokens and the state transition can be described as
\def\EJP{
	\[ (t₁,…,tₙ) \rightsquigarrow \mathrm{sort}(t₁,…,t_{i^*}+X,…,tₙ)\text{ where $i^* \sim σ$ and $X \sim \Exp(1)$}. \tag{EJP}\]
}
\EJP
We provide an exact analysis, which we believe to be of independent interest, and explain the connection between the EJP and the MultiQueue in \cref{sec:EJP}.
With the same methods as before, we analyse the differences $(d₁,…,d_{n-1})$ with $d_i = t_{i+1}-t_i$.
\begin{restatable}{theorem}{ejpthm}
	\label{thm:ejp-stationary}
	Let $n ∈ ℕ$ and let $σ$ be a distribution on $[n]$ that satisfies $(☆)$. The EJP admits a stationary distribution $π$ for the differences $(d₁,…,d_{n-1})$ with
	\[
		π = \bigtimes_{i = 1}^{n-1} \Exp(n·σ_{\upto i}-i).
	\]
\end{restatable}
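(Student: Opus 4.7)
The plan is to verify directly that $\pi$ is invariant for the Markov chain on differences $(d_1,\ldots,d_{n-1})$ induced by one EJP step; since the theorem only claims existence of a stationary distribution, I do not need to address uniqueness or convergence. Concretely, I compute the joint Laplace transform $\mathbb{E}_\pi[e^{-\sum_i s_i d_i'}]$ of the new differences and verify it equals $\hat\pi(\vec s) = \prod_{i=1}^{n-1}\lambda_i/(\lambda_i+s_i)$, where $\lambda_i := n\sigma_{\upto i}-i$. To begin, I partition the transition by conditioning on $i^* = j$ and on the number $k \in \{0,1,\ldots,n-j\}$ of tokens the jumping token overtakes. On the event $E_{j,k} := \{d_j+\cdots+d_{j+k-1} \leq X < d_j+\cdots+d_{j+k}\}$ (upper bound vacuous if $k = n-j$), the new differences are explicit functions of $(\vec d, X)$: gaps $d_j, \ldots, d_{j+k-1}$ shift left in index, the residual $R := X - \sum_{l=0}^{k-1} d_{j+l}$ becomes one new gap, $d_{j+k}-R$ (if $k < n-j$) becomes the next, $d_{j-1}$ absorbs either $d_j$ (for $k\geq1$) or $X$ (for $k=0$) when $j\geq 2$, and everything outside this window is unchanged.

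The key technical tool is iterated memorylessness of the exponential distribution. Conditioning sequentially on $X > d_j$, $X-d_j > d_{j+1}$, and so on, and finally on $X - \sum_{l=0}^{k-1} d_{j+l} < d_{j+k}$ (when $k<n-j$), yields under $\pi\otimes\Exp(1)$ the following clean post-conditioning picture on $E_{j,k}$: the consumed gaps have shifted rates $d_{j+l} \sim \Exp(\lambda_{j+l}+1)$, the residual satisfies $R \sim \Exp(1+\lambda_{j+k})$ (or $\Exp(1)$ if $k=n-j$), and the remainder $d_{j+k}-R$ keeps rate $\Exp(\lambda_{j+k})$; all are jointly independent and independent of the unaffected gaps. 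Consequently, the conditional Laplace transform of $(d_1',\ldots,d_{n-1}')$ on $E_{j,k}$ factors into one-dimensional pieces, and the unconditional contribution $\sigma_j\Pr[E_{j,k}]\cdot\mathbb{E}[e^{-\sum_i s_i d_i'}\mid E_{j,k}]$ can be written as $\sigma_j\hat\pi(\vec s)\cdot R_{j,k}(\vec s)$ for an explicit rational function $R_{j,k}$.

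The main obstacle is the algebraic identity $\sum_{j,k} \sigma_j R_{j,k}(\vec s) = 1$ required for invariance. Per-case contributions contain denominators like $\lambda_m+1+s_{m-1}$ absent from $\hat\pi$, so no single term matches the target; the cancellations only appear after summing. I expect them to arise from a telescoping argument: grouping contributions by their ``affected index range'' and summing over $j$ first converts sums of $\sigma_j$ into telescoping differences of $\sigma_{\upto\cdot}$, which when combined with the defining relation $\lambda_i+i = n\sigma_{\upto i}$ collapse the $s_i$-dependent factors down to $1$. The combinatorial bookkeeping of this summation, rather than any single manipulation, is the real technical hurdle. Once the identity is established, the theorem follows immediately by uniqueness of the Laplace transform.
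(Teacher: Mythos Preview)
Your approach via Laplace transforms and a $(j,k)$-decomposition (which token jumps, how many it overtakes) is sound, and your description of the conditional distributions on $E_{j,k}$ is correct. But it is a genuinely different organisation from the paper's, and the paper's is cleaner in exactly the place you flag as the obstacle. Rather than Laplace transforms, the paper works directly with densities; rather than your double index $(j,k)$, it adopts \emph{billiard-ball semantics}: the jumping token stops upon collision and transfers its residual $\Exp(1)$ momentum to the next ball, so at every instant during a transition there is a well-defined moving ball $i$ and current difference vector $\vec d$. The key lemma is that the \emph{intensity} $\eta(\vec d,i)$ with which this transitional state $(\vec d,i)$ is visited, starting from $\pi$, equals $\tfrac{1}{n}f_\pi(\vec d)$, \emph{independently of $i$}. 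This is proved by a short induction on $i$ with only two cases (ball $i$ was the one initially selected, or ball $i$ was just struck by ball $i-1$), and stationarity follows in one line because the stopping density at $\vec d$ is $\sum_{i=1}^n \eta(\vec d,i)=f_\pi(\vec d)$. The induction hypothesis does for free what your sum over $j\le i$ and over the overtaken gaps would do by hand: each appeal to the hypothesis already packages all ways the motion could have reached ball $i$, so the telescoping you anticipate is built into the recursion and no rational-function identity in $\vec s$ ever needs to be written down. Your plan would work if carried out, but the billiard-ball reparametrisation is precisely the organising idea that turns your ``real technical hurdle'' into a two-line induction step.
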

In other words, the distances between neighbouring tokens are, in the long-run, mutually independent and exponentially distributed with parameters as given.

\section{A Direct Analysis of the \texorpdfstring{$\bm{\sigma}$}{σ}-MultiQueue}
When analysing random processes, it is often a good idea to reveal information only when needed, keeping the rest hidden behind a veil of probability.
In our case, the idea is to conceal the queue an element is in until the element becomes a top-element.

We discuss this idea using the example in \cref{fig:discrete-multiqueues} where $n = 4$ queues are initially populated with the set $ℕ$.

\begin{figure}[htb]
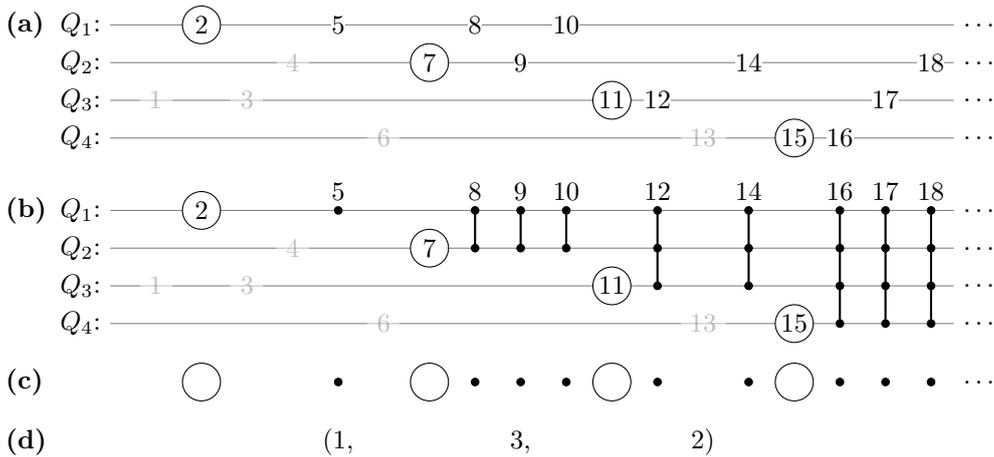

	\includegraphics[page=1]{img/discrete}\\[5pt]
	\includegraphics[page=2]{img/discrete}\\[5pt]
	\includegraphics[page=3]{img/discrete}\\[5pt]
	\includegraphics[page=4]{img/discrete}
	\caption{%
		Increasingly abstract ways for modeling the state of a MultiQueue system with $4$ queues.
		\textbf{(a)} contains all information.
		\textbf{(b)} assumes that the queues in which the elements reside has only been partially revealed.
		\textbf{(c)} abstracts away from concrete elements.
		\textbf{(d)} represents the information in (c) using numbers.%
	}
	\label{fig:discrete-multiqueues}
\end{figure}

In (a) we see an explicit representation of a possible state, where queues are labeled in increasing order of their top-elements.
We can tell, for instance, that when removing the top-element $7$ of queue $2$ then the new top-element would be $9$.
In (b) we keep track of the current and past top-elements of all queues but do not reveal ahead of time which queue each element of $ℕ$ is assigned to.
As far as we know, the elements $16,17,18,…$ are assigned to each of the four queues with equal probability.
It is unavoidable that we obtain partial information, however:
If an element is smaller than the top-element of some queue, it cannot possibly be contained in that queue.
The elements $8$, $9$ and $10$ are in queue $1$ and $2$ with probability $1/2$ each, and the elements $12$ and $14$ are in queues $1$, $2$ and $3$ with probability $1/3$ each.
Element $5$ is surely contained in queue $1$, but we can treat this as a degenerate probability distribution rather than as a special case.
Note what happens when element $7$ is deleted: First, $8$ has a chance of $1/2$ of being the new top-element of queue $2$.
If it turns out that $8$ is not the new top-element, then $9$ gets the same chance, then $10$.
If all three elements are rejected, then element $12$ is considered, getting a chance of $1/3$ (because it could still be in three queues) and so on.

Since we are only interested in the ranks of top-elements over time, we can forget the removed elements and the concrete element values and arrive at representation (c), showing $n$ \emph{balls} “\includegraphics[page=6,scale=0.6]{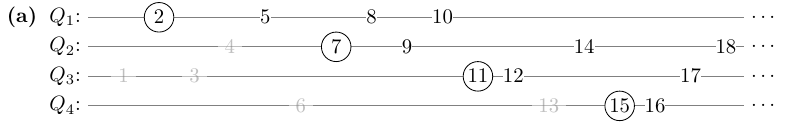}” representing top-elements and \emph{dots} “\includegraphics[page=5]{img/discrete.pdf}” representing other elements.
Equivalently, in (d) we list the sequence of dot-counts in between the balls, omitting the infinite number of dots to the right of the last ball.

\def\dstart{\dvec_\mathrm{start}}
\def\dend{\dvec_\mathrm{end}}
We now represent the $\sigma$-MultiQueue as a Markov chain with states in $ℕ₀^{n-1}$ as in (d), borrowing language from (b) and (c) when useful.
Since the state space is countably infinite, the role of the transition matrix is filled by an infinite family $P$ of transition probabilities where $P(\dstart,\dend)$ denotes the probability to transition from state $\dstart ∈ ℕ₀^{n-1}$ to state $\dend ∈ ℕ₀^{n-1}$. These probabilities are implicitly described below.
We write $\dvec$ for a state $(d₁,…,d_{n-1}) ∈ ℕ₀^{n-1}$ and $\unit{i} = 0^{i-1}\kern0.5pt1\kern0.5pt0^{n-i-1} ∈ \{0,1\}^{n-1}$ denotes the $i$th unit vector for $i ∈ [n-1]$.
We avoid special cases related to the last ball by defining $dₙ = ∞$ and $\unit{n} = \vec{0}$.

A state $\dstart$ transitions to another state $\dend$ via a sequence of \emph{transitional states} $(\dvec,i)\in ℕ₀^{n-1} × [n]$ in which one ball $i ∈ [n]$ (numbered from left to right) is marked as the \emph{active ball}.
\begin{enumerate}[1.]
	• Given state $\dstart$, we sample $i^* \sim σ$ and obtain the transitional state $(\dstart,i^*)$.\\
	Interpretation: In terms of (c) we activate ball $i^*$ and in terms of (b) we delete the top-element from queue $Q_{i^*}$ and look for a new one.
	• As long as we are in a transitional state $(\dvec,i)$, there are two cases:
	\begin{enumerate}[{2}.1]
		• If $d_i > 0$, then with probability $(i-1)/i$ we continue with transitional state $(\dvec - \unit{i}+\unit{i-1},i)$ and with probability $1/i$ the transition ends with state $\dend = \dvec - \unit{i}$.\\
		Interpretation: In terms of (c) the active ball decides to skip past the dot to the right of it, or consumes the dot and stops.
		In terms of (b), we reveal whether the next top-element candidate for $Q_i$ is contained in $Q_i$; if it is, the candidate becomes the new top-element and we stop, otherwise we continue the search.
		• If $d_i = 0$, then we continue with transitional state $(\dvec,i+1)$.\\
		Interpretation: In terms of (c) the active ball overtakes another ball, thereby becoming ball $i+1$.
		In terms of (b), we update the ordering of the queues since the new top-element of $Q_i$ is now known to be larger than the top-element of queue $Q_{i+1}$.
	\end{enumerate}
\end{enumerate}

We now state the main result of this section, which characterizes a stationary distribution of $P$.
With this lemma, we can finally prove \cref{thm:main} and \cref{cor:expected-rank-error}.
\begin{theorem}
	\label{lem:stationary}
	Let $n ∈ ℕ$ and let $σ$ be a distribution on $[n]$ that satisfies $(☆)$.
	The transition probabilities $P$ admits the stationary distribution $π$ given by
	\[
		π = \bigtimes_{i = 1}^{n-1} \Geom\Big(1-\frac{i}{n·σ_{\upto i}}\Big),
	\]
	where $\Geom(p)$ denotes the geometric distribution of the number of failed Bernoulli trials with success probability $p$ before the first success, and $\bigtimes$ denotes the direct product of distributions.
\end{theorem}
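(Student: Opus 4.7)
The plan is to verify the global balance equation $\pi P = \pi$ directly: fix a target state $\dvec'$ and show that summing $\pi(\dvec)\,P(\dvec,\dvec')$ over all possible start states $\dvec$ and trajectories reproduces $\pi(\dvec')$. Since $\pi$ is a product of geometric distributions and the trajectory probabilities factor multiplicatively, I expect the summations to decouple into one-dimensional geometric series that collapse via telescoping.

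Concretely, I would parameterise each trajectory by the triple $(i^*, j, m_j)$ -- the initial choice sampled from $\sigma$, the final position of the active ball, and the number of ``moves'' in the last gap -- together with one free parameter $s$ recording the initial value of $d_{i^*}$ (since the trajectory only determines the sum $d_{i^*-1}' = d_{i^*-1}+d_{i^*}$, not the individual pieces). For each such tuple, the trajectory probability factors as $\sigma_{i^*}\cdot\prod_{k=i^*}^{j-1}\bigl(\tfrac{k-1}{k}\bigr)^{d_k}\cdot\bigl(\tfrac{j-1}{j}\bigr)^{m_j}\cdot\tfrac{1}{j}$, and the start state is fully determined by $(\dvec', i^*, j, m_j, s)$. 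Writing $q_k := k/(n\sigma_{\upto k})$ and multiplying by $\pi(\dvec)/\pi(\dvec')$, each factor pairs off with its trajectory partner through the key identity $\tfrac{q_k}{q_{k-1}}\cdot\tfrac{k-1}{k} = \tfrac{\sigma_{\upto k-1}}{\sigma_{\upto k}}$, which collapses the long product. Summing the resulting geometric series over $s$ and $m_j$ (the condition on $\sigma$ from the theorem hypothesis ensures convergence) and then telescoping over $k = i^*, \ldots, j-1$ and finally over $(i^*, j)$ should leave exactly $\pi(\dvec')$.

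The main obstacle is the bookkeeping at the boundaries. Three cases deviate from the uniform formula: $i^* = 1$ (no $d_0'$ exists, so $s$ ranges over all of $\mathbb{N}_0$ rather than a bounded range); $j = i^*$ (no overtake occurs, so the ``merge'' and ``split'' happen at the same position and the merged-gap picture degenerates); and $j = n$ (no entry $d_n'$ exists, so the ``split'' of the final gap has no remainder piece and $m_j$ is unbounded). Each case must be handled separately, but the paper's conventions $\sigma_{\upto 0} = 0$, $\sigma_{\upto n} = 1$, $d_n = \infty$, and $\unit{n} = \vec{0}$ should allow the general computation to specialise correctly at each boundary so that nothing genuinely new needs to be proved there.
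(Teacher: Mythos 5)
Your direct trajectory summation is a valid alternative, and the cancellation $\tfrac{q_k}{q_{k-1}}\cdot\tfrac{k-1}{k}=\tfrac{\sigma_{\upto k-1}}{\sigma_{\upto k}}$ is exactly the identity the paper exploits through Observation~\ref{obs:shifting-geometric}. The organisational difference is that the paper introduces an intermediate ``flow'' quantity $\nu(\dvec,i)$, the probability of passing through transitional state $(\dvec,i)$, and establishes by a double induction on $i$ and $\sum_{k<i}d_k$ (Lemma~\ref{lem:geom-probability}) the clean invariant $\nu(\dvec,i)=\pi(\dvec)\,\sigma_{\upto i}$; stationarity then follows from a one-line sum over $i$. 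That lemma effectively memoises the telescoping you propose to carry out by hand and absorbs the boundary bookkeeping into the base case $i=1$ together with the conventions $d_n=\infty$, $\unit{n}=\vec 0$. The paper's route is shorter and yields an interpretable invariant; your route is more elementary but needs more care with cases. Two corrections to your sketch. First, the tuple $(i^*,j,m_j,s)$ is over-parameterised: for $j>i^*$ the count $m_j=d'_{j-1}$ is already fixed by the target state and $s$ is the only free parameter, while for $j=i^*$ the split $s$ is fixed by $m_j$ and only $m_j$ is free. Second, the unbounded parameter range arises at $j=n$ (where $d_n=\infty$ permits arbitrarily many skips), not at $i^*=1$: the first ball cannot skip (probability $\tfrac{0}{1}$), so when $i^*=1$ and $j>1$ one needs $s=d_1=0$ to overtake at all, and when $i^*=j=1$ the start state is uniquely determined. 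Your expectation that the conventions $\sigma_{\upto 0}=0$ and $d_n=\infty$ let the general formula specialise is correct, so once the ranges are fixed the plan goes through.
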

In particular, the $n-1$ components of $\dvec \sim π$ are \emph{independent} random variables.
We further make the following useful observation.
\begin{observation}
	\label{obs:shifting-geometric}
	For any $\dvec ∈ ℕ₀^{n-1}$ and any $i∈[n]$ we have $\displaystyle \pi(\dvec+\unit{i}) = \pi(\dvec)·\frac{i}{n\cdot σ_{\upto i}}$.
\end{observation}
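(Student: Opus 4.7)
The plan is to verify the identity by direct calculation from the explicit form of $\pi$ given in \cref{lem:stationary}. Writing $p_j := 1 - \frac{j}{n\sigma_{\upto j}}$ and $q_j := 1 - p_j = \frac{j}{n\sigma_{\upto j}}$ for $j \in [n-1]$, the product structure of $\pi$ gives
\[
  \pi(\dvec) \;=\; \prod_{j=1}^{n-1} p_j \, q_j^{d_j},
\]
since the pmf of $\Geom(p_j)$ at $k$ is $p_j q_j^k$ (number of failures before the first success). Hypothesis $(\star)$ ensures each $p_j \in (0,1)$, so these factors are well defined.

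For $i \in [n-1]$, the vector $\dvec + \unit{i}$ differs from $\dvec$ only in coordinate $i$, which is incremented by $1$. Hence only the $i$th factor of the product changes, and it changes by a multiplicative $q_i$:
\[
  \pi(\dvec + \unit{i}) \;=\; p_i q_i^{d_i+1} \prod_{j \neq i} p_j q_j^{d_j} \;=\; q_i \cdot \pi(\dvec) \;=\; \frac{i}{n\sigma_{\upto i}}\,\pi(\dvec).
\]

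It remains to handle the boundary case $i = n$. By the convention $\unit{n} = \vec{0}$ we have $\pi(\dvec + \unit{n}) = \pi(\dvec)$, so we need $\frac{n}{n\sigma_{\upto n}} = 1$. This holds because $\sigma_{\upto n} = 1$, as $\sigma$ is a probability distribution on $[n]$. Nothing in this argument is delicate; the only subtlety worth noting is that $(\star)$ is required for $\pi$ to be a genuine probability distribution, and that the $i=n$ case relies on the natural normalisation $\sigma_{\upto n}=1$ rather than on $(\star)$ itself.
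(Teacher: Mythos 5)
Your proof is correct and takes essentially the same approach as the paper: both express $\pi$ as a product of geometric pmf's, observe that incrementing coordinate $i$ multiplies that factor by $1-p_i = \frac{i}{n\sigma_{\upto i}}$, and dispatch $i=n$ via the conventions $\unit{n}=\vec{0}$ and $\sigma_{\upto n}=1$.
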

\begin{proof}[Proof of \cref{obs:shifting-geometric}]
	If $i = n$ then $\unit{i} = \vec{0}$ and $σ_{\upto i} = 1$ so the claim is trivial. Now assume $i < n$. For any $\dvec ∈ ℕ₀^{n-1}$ we have
	\[
		\pi(\dvec)=\prod_{i=1}^{n-1} (1-p_i)^{d_i}p_i \text{ where } p_i = 1-\frac{i}{n·σ_{\upto i}}.
	\]
	Hence, $π(\dvec+\unit{i}) = π(\dvec)·(1-p_i)$ and the claim follows.
\end{proof}
The following auxiliary lemma captures the main insight required in the proof of \cref{lem:stationary}.
\begin{lemma}
	\label{lem:geom-probability}
	Let $\nu(\dvec,i)\in \mathbb{R}_{\geq 0}$ be the probability that a transitional state $(\dvec,i)∈ ℕ₀^{n-1} × [n]$ occurs when transitioning from state $\dstart \sim π$ according to $P$ with $σ$ satisfying $(☆)$.
	Then,
	\[
		\nu(\dvec,i) = π(\dvec)·σ_{\upto i}.
	\]
\end{lemma}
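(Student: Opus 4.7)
The plan is to set up a balance equation for $\nu(\dvec,i)$ by enumerating the ways in which the transitional state $(\dvec,i)$ can be reached, and then to verify the claimed identity $\nu(\dvec,i)=\pi(\dvec)\sigma_{\upto i}$ by a nested induction in which the critical algebraic cancellation is supplied by \cref{obs:shifting-geometric}.

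The starting observation is that the active-ball index is non-decreasing along a transition: it stays fixed under rule~2.1 and strictly increases under rule~2.2. So each transitional state is visited at most once, and $\nu(\dvec,i)$ decomposes as a sum of three contributions: the initial entry, worth $\pi(\dvec)\sigma_i$; entry from $(\dvec,i-1)$ via rule~2.2, available only when $d_{i-1}=0$ and then worth $\nu(\dvec,i-1)$; and entry from $(\dvec+\unit{i}-\unit{i-1},i)$ via rule~2.1, available only when $d_{i-1}\geq 1$ (so that the predecessor state is valid) and then worth $\tfrac{i-1}{i}\nu(\dvec+\unit{i}-\unit{i-1},i)$.

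I would then prove the identity by outer induction on $i$ and inner induction on $d_{i-1}$. The outer base $i=1$ is immediate because the rule~2.1 contribution vanishes ($(i-1)/i=0$) and there is no preceding active ball. In the outer step, the inner base $d_{i-1}=0$ combines the initial contribution with the outer hypothesis to yield $\pi(\dvec)(\sigma_i+\sigma_{\upto i-1})=\pi(\dvec)\sigma_{\upto i}$. For $d_{i-1}\geq 1$, the inner hypothesis supplies the value of $\nu$ at $\dvec+\unit{i}-\unit{i-1}$, and two applications of \cref{obs:shifting-geometric} rewrite $\pi(\dvec+\unit{i}-\unit{i-1})$ as $\pi(\dvec)\cdot\tfrac{i\,\sigma_{\upto i-1}}{(i-1)\,\sigma_{\upto i}}$. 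The weight $(i-1)/i$ of rule~2.1 and the factor $\sigma_{\upto i}$ cancel cleanly, reducing the balance equation to the tautology $\sigma_i+\sigma_{\upto i-1}=\sigma_{\upto i}$.

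The main obstacle I anticipate is getting the bookkeeping right --- in particular, recognising that the validity constraint on the \emph{incoming} rule~2.1 transition is $d_{i-1}\geq 1$ (the predecessor must have $d_{i-1}\geq 0$ after being shifted), not $d_i\geq 1$ as the outgoing formulation of the rule might suggest, and that the recursion must therefore be unwound by induction on $d_{i-1}$. Once this induction structure is fixed, the remaining algebra is a single line of cancellation driven by \cref{obs:shifting-geometric}.
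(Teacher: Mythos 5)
Your proposal is correct and matches the paper's proof almost exactly: same three-way decomposition of how $(\dvec,i)$ can be reached, same base case, same use of \cref{obs:shifting-geometric} to drive the cancellation, and the same observation that $d_{i-1}$ determines which predecessor type applies. The only cosmetic difference is that you formalize the induction as an outer induction on $i$ with an inner induction on $d_{i-1}$, whereas the paper inducts on $i$ and on $\sum_{j<i} d_j$; both well-orderings justify the same two recursive calls.
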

\begin{proof}[Proof of \cref{lem:geom-probability}]
	We prove $\nu(\dvec,i) = π(\dvec)·σ_{\upto i}$ by induction on $i$ and the sum of the distances $d_1,\ldots,d_{i-1}$.
	In general, there are three ways in which a transitional state $(\dvec,i)$ might be reached:
	\begin{enumerate}[(i)]
		• The transition started with $\dvec=\dstart$ and ball $i$ was activated.
		• Ball $i$ has skipped a dot and we thus reached $(\dvec,i)$ from $(\dvec-\unit{i-1}+\unit{i},i)$.
		• Ball $i$ has just overtaken another ball and we thus reached $(\dvec,i)$ from $(\dvec,i-1)$.
	\end{enumerate}

	For $i=1$, consider a transitional state $(\dvec,1)$ where the first ball is active.
	Here, only (i) is possible since the first ball never skips a dot (transition ends with probability $1$ in rule~2.1) and there is no ball to its left.
	The probability for $(\dvec,1)$ to occur is thus
	\[
		\nu(\dvec,1)=\pi(\dvec)\cdot σ_1=\pi(\dvec)\cdot σ_{\upto 1} \qquad \text{(recall that $σ_i := \Pr_{i^* \sim σ}[i^* = i]$ and $σ_{\upto i} := \Pr_{i^* \sim σ}[i^* ≤ i]$)}.
	\]
	For $i>1$, there are two cases depending on whether $d_{i-1} > 0$, i.e., whether there is a dot in between ball $i-1$ and ball $i$.
	If $d_{i-1}>0$, then only (i) and (ii) are possible, so
	\begin{align*}
		\nu(\dvec,i) & = \pi(\dvec)\cdot σ_i + \nu(\dvec-\unit{i-1}+\unit{i},i)\cdot (1-\tfrac{1}{i})                                                                                                        \\
		             & = \pi(\dvec)\cdot σ_i + \pi(\dvec-\unit{i-1}+\unit{i})\cdot σ_{\upto i}\cdot (1-\tfrac{1}{i}) \tag{Induction}                                                                         \\
		             & = \pi(\dvec)\cdot σ_i + \pi(\dvec)\cdot \frac{n\cdot σ_{\upto i-1}}{i-1}\cdot \frac{i}{n\cdot σ_{\upto i}}\cdot σ_{\upto i}\cdot (1-\tfrac{1}{i}) \tag{\cref{obs:shifting-geometric}} \\
		             & = \pi(\dvec)(σ_i+σ_{\upto i-1}) = \pi(\dvec)\cdot σ_{\upto i}.
	\end{align*}
	If $d_{i-1}=0$, then only (i) and (iii) are possible, so
	\begin{align*}
		\nu(\dvec,i) & = \pi(\dvec)\cdot σ_i + \nu(\dvec,i-1)
		\stackrel{\text{ind.}}{=} \pi(\dvec)\cdot σ_i + \pi(\dvec)\cdot σ_{\upto i-1}
		= \pi(\dvec)\cdot σ_{\upto i}.\qedhere
	\end{align*}
\end{proof}
With \cref{lem:geom-probability} in place, we can now prove \cref{lem:stationary}.
\begin{proof}[Proof of \cref{lem:stationary}]
	Given a state $\dstart \sim \pi$, we transition to a new state $\dend$ according to the transition probabilities $P$.
	To end in $\dend$, we first need to reach the transitional state $(\dend+\unit{i},i)$ for some $i\in [n]$ and then decide to end the transition there.
	Note that we need $(\dend + \unit{i},i)$ rather than $(\dend,i)$, since ending the transition (rule~2.1) reduces $d_i$ by one.
	The probability to end in $\dend$ is therefore
	\begin{align*}
		\Pr[\dend=\dvec]
		 & \stackrel{\text{\phantom{Obs.\,6}}}=
		\sum_{i=1}^n \nu(\dvec+\unit{i},i) \cdot \tfrac{1}{i}
		\stackrel{\text{Lem.\,\ref{lem:geom-probability}}}{=}
		\sum_{i=1}^n \pi(\dvec+\unit{i})\cdot σ_{\upto i} \cdot \tfrac{1}{i} \\
		 & \stackrel{\text{Obs.\,\ref{obs:shifting-geometric}}}{=}
		\sum_{i=1}^n \pi(\dvec)\cdot \frac{i}{n\cdot σ_{\upto i}}\cdot σ_{\upto i}\cdot \tfrac{1}{i}
		= \pi(\dvec) \sum_{i=1}^n \tfrac{1}{n} = \pi(\dvec).
	\end{align*}
	It follows that $\dend$ is again distributed according to $\pi$ and $\pi$ is a stationary distribution.
\end{proof}
Finally, we prove \cref{thm:main} and \cref{cor:expected-rank-error}.
\begin{proof}[Proof of \cref{thm:main}]
	Let $\dvec^{(s)} = (d₁^{(s)},…,d_{n-1}^{(s)})$ with $d_i^{(s)} = r_{i+1}^{(s)}-r_i^{(s)}-1$.
	Then, $(\dvec^{(s)})_{s ∈ ℕ}$ is a Markov chain with transition probabilities $P$.
	Since we can reach $(0,…,0)$ from any state and vice versa, the Markov chain is irreducible.
	The Markov chain is aperiodic since $(0,…,0)$ can transition into itself (if ball $n$ is activated and immediately stops).
	This implies that the stationary distribution $π$ that we found is unique and that $\dvec^{(s)}$ converges in distribution to $d \sim π$ (see \cite[Theorem 1.8.3]{norrisMarkovChains1997}).
	Let $\vec{δ}^{(s)} = (δ₁^{(s)},…,δ_{n-1}^{(s)})$ with $δ_i^{(s)} = d_i^{(s)}+1$.
	Clearly, $(\vec{δ}^{(s)})_{s ∈ ℕ}$ converges in the same way, except that the geometric random variables are shifted.
	By definition, we have $r_i^{(s)} = 1 + \sum_{j = 1}^{i-1}δ_j^{(s)}$ so
	the claimed distributional limit $(r₁,…,rₙ)$ of $(r₁^{(s)},…,rₙ^{(s)})$ follows.
\end{proof}
\begin{proof}[Proof of \cref{cor:expected-rank-error}]
	\cref{thm:main} states the ranks of the top-elements converge in distribution to $(r_1,\ldots,r_n)$.
	The distribution for $E$ follows from the fact that we select the queue to delete from according to $\sigma$ and deleting an element with rank $r$ yields a rank error of $r-1$.
	Using the fact that $\mathbb{E}_{X\sim \Geom_1(p)}[X]=1/p$, we have for the expected rank error
	\begin{align*}
		\mathbb{E}[E] & = \mathbb{E}\left[r_{i^*}-1\right] = \sum_{i=1}^{n} σ_i\sum_{j=1}^{i-1} \mathbb{E}[δ_{j}] = \sum_{j=1}^{n-1} \mathbb{E}[{δ_{j}}] \sum_{i=j+1}^n σ_i = \sum_{j=1}^{n-1} \mathbb{E}[{δ_{j}}]\cdot (1-σ_{\upto j}) \\
		              & =\sum_{j=1}^{n-1} \frac{nσ_{\upto j}}{nσ_{\upto j}-j}\cdot (1-σ_{\upto j}) =\sum_{j=1}^{n-1} \frac{σ_{\upto j}(1-σ_{\upto j})}{σ_{\upto j}-\frac{j}{n}}.\qedhere
	\end{align*}
\end{proof}

\section{Application to the \texorpdfstring{$\bm{c}$}{c}-MultiQueue}\label{sec:multiqueues}

In this section we apply our results on the general $σ$-MultiQueue to the $c$-MultiQueue with $c > 1$.
After checking in \cref{lem:best-of-c-satisfies-☆} that the corresponding $σ$ satisfies $(☆)$, we proceed to compute expected rank errors (\cref{thm:expected-rank-error}) and derive a concentration bound (\cref{cor:rank-error-concentration}). This involves straightforward (though mildly tedious) calculations.

\begin{lemma}
	\label{lem:best-of-c-satisfies-☆}
	In the $c$-MultiQueue with $c = ⌊c⌋+ε > 1$ we have
	\[ σ_{\upto i} = 1-(1-\tfrac in)^{⌊c⌋}(1-ε\tfrac in) \text{ for all $i ∈ [n]$, and $σ$ satisfies ($☆$).}\]
\end{lemma}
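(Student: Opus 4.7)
The plan is to establish the closed form for $σ_{\upto i}$ by a direct probabilistic computation, and then verify $(☆)$ by an elementary algebraic manipulation with a small case split.

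For the formula, I unfold the definition of the $c$-MultiQueue from \cref{sec:intro}: a deletion selects $⌊c⌋+1$ queues with probability $ε = c - ⌊c⌋$ and $⌊c⌋$ queues otherwise, each queue chosen independently and uniformly at random. By definition $σ_{\upto i}$ is the probability that the deleted element comes from one of the first $i$ queues, which happens iff at least one of the selected queues lies among the first $i$. Complementing and conditioning on how many queues are drawn gives
\[ 1 - σ_{\upto i} = (1-ε)\bigl(1-\tfrac{i}{n}\bigr)^{⌊c⌋} + ε\bigl(1-\tfrac{i}{n}\bigr)^{⌊c⌋+1} = \bigl(1-\tfrac{i}{n}\bigr)^{⌊c⌋}\bigl(1 - ε\tfrac{i}{n}\bigr), \]
where the second step factors out $(1-i/n)^{⌊c⌋}$ and simplifies $(1-ε) + ε(1-i/n) = 1 - εi/n$. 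Rearranging yields the claimed formula.

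For $(☆)$, I substitute $x := i/n$, which lies in $(0,1)$ for $i ∈ [n-1]$. The inequality $σ_{\upto i} > i/n$ becomes $(1-x)^{⌊c⌋}(1-εx) < 1-x$, and dividing by the positive quantity $1-x$ reduces the task to showing $(1-x)^{⌊c⌋-1}(1-εx) < 1$ for every $x ∈ (0,1)$.

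The only mild subtlety is handling the possibility that $ε = 0$, which calls for a case split on $⌊c⌋$. If $⌊c⌋ ≥ 2$, then $(1-x)^{⌊c⌋-1} < 1$ (because $x > 0$ and the exponent is positive) and $1 - εx ≤ 1$, so the product is strictly less than $1$. If $⌊c⌋ = 1$, then $c > 1$ forces $c ∈ (1,2)$, hence $ε ∈ (0,1)$; now $(1-x)^{⌊c⌋-1} = 1$ while $1 - εx < 1$ since $εx > 0$, so the product is again strictly less than $1$. Either way the desired strict inequality holds, establishing $(☆)$.
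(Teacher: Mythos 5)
Your proof is correct and follows essentially the same route as the paper: compute $1-σ_{\upto i}$ by complementing and conditioning on whether $⌊c⌋$ or $⌊c⌋+1$ queues are drawn, factor out $(1-i/n)^{⌊c⌋}$, and then observe that $(1-i/n)^{⌊c⌋}(1-εi/n) < 1-i/n$ because either $⌊c⌋ ≥ 2$ or $ε > 0$. The paper states the final inequality more tersely, but your explicit division by $1-x$ and case split on $⌊c⌋$ is just a slightly more spelled-out version of the same argument.
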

\begin{proof}
	Recall that we sample $⌊c⌋$ queues with probability $1-ε$ and $⌊c⌋+1$ queues otherwise.
	We fail to select one of the first $i$ queues only if none of them were sampled.
	Hence for $i < n$:
	\begin{align*}
		σ_{\upto i} & = 1-(1-ε)(1-\tfrac in)^{⌊c⌋}-ε(1-\tfrac in)^{⌊c⌋+1}
		= 1-(1-\tfrac in)^{⌊c⌋}(1 - ε + ε(1-\tfrac in))
		\\
		            & = 1-(1-\tfrac in)^{⌊c⌋}(1 - ε\tfrac in)
		> 1-(1-\tfrac in) = \tfrac in.
	\end{align*}
	The inequality uses that we have $⌊c⌋ ≥ 2$, or $ε > 0$, or both.
\end{proof}

We now prove \Cref{thm:expected-rank-error}, restated here for easier reference.
\rankerrorthm*
\begin{proof}[Proof of \cref{thm:expected-rank-error}]
	We have just checked in \cref{lem:best-of-c-satisfies-☆} that $\sigma$ satisfies $(☆)$ and computed $σ_{\upto i} = 1-(1-\tfrac in)^{⌊c⌋}(1-ε\tfrac in)$.
	We can therefore specialise the formula for the expected rank error from \cref{cor:expected-rank-error} by plugging in $\sigma$ and simplifying.
	\begin{align*}
		𝔼[E] & =
		\sum_{i=1}^{n-1} (1-σ_{\upto i}) \frac{σ_{\upto i}}{σ_{\upto i}-\frac{i}{n}}
		=
		\sum_{i=1}^{n-1} (1-\tfrac{i}{n})^{⌊c⌋}(1-ε \tfrac{i}{n})\frac{1-(1-\frac{i}{n})^{⌊c⌋}(1-ε \frac{i}{n})}{1-(1-\frac{i}{n})^{⌊c⌋}(1-ε \frac{i}{n})-\frac{i}{n}}                          \\
		     & =\sum_{i=1}^{n-1} (1-\tfrac{i}{n})^{⌊c⌋-1}(1-ε \tfrac{i}{n})\frac{1-(1-\frac{i}{n})^{⌊c⌋}(1-ε \frac{i}{n})}{1-(1-\frac{i}{n})^{⌊c⌋-1}(1-ε \frac{i}{n})}\tag{cancel $1-\frac in$} \\
		     & = \sum_{i=1}^{n-1} (\tfrac{i}{n})^{⌊c⌋-1}(1-ε + ε\tfrac{i}{n})\frac{1-(\frac{i}{n})^{⌊c⌋}(1-ε+ε\frac{i}{n})}{1-(\frac{i}{n})^{⌊c⌋-1}(1-ε+ε\frac{i}{n})}\tag{substitute $i→n-i$.} \\
		     & = \sum_{i=1}^{n-1} f_c(\tfrac{i}{n}).\tag{using the definition of $f_c(x)$ given above.}
	\end{align*}
	We are now ready to prove claim (i). This uses that $f₂(x) = x·\frac{1-x²}{1-x} = x·(1+x) = x + x²$.
	\begin{align*}
		𝔼[E] & = \sum_{i = 1}^{n-1} f₂(\tfrac{i}{n})
		= \sum_{i = 1}^{n-1} \big(\tfrac{i}{n}+(\tfrac{i}{n})²\big)
		= \frac 1{n}\sum_{i=1}^{n-1} i + \frac 1{n²}\sum_{i=1}^{n-1} i²              \\
		     & =\frac{n-1}{2}+\frac{(n-1)(2n-1)}{6n}= \tfrac{5}{6}n-1+\tfrac{1}{6n}.
	\end{align*}
	Similarly we can prove (ii). First, we simplify $f_c(x)$ for $c = 1 + ε$ with $ε ∈ (0,1)$:
	\begin{align*}
		f_{1+ε}(x) & = (1-ε+εx)·\frac{1-x(1-ε+εx)}{ε-εx}
		= (1-ε+εx)·\frac{(1-x)(1+εx)}{ε(1-x)}                                 \\
		           & = (1-ε+εx)·\frac{1+εx}{ε} = \tfrac{1-ε}{ε}+(2-ε)x + εx².
	\end{align*}
	We then get (omitting a simple calculation):
	\begin{align*}
		𝔼[E] & = \sum_{i = 1}^{n-1} f_{1+ε}(\tfrac{i}{n})
		= \sum_{i = 1}^{n-1}(\tfrac{1-ε}{ε}+(2-ε)\tfrac{i}{n}+ε(\tfrac{i}{n})²)                                                                             \\
		     & = \tfrac{1-ε}{ε}(n-1) +(2-ε)\tfrac{n(n-1)}{2n}+ε\tfrac{(n-1)n(2n-1)}{6n²} = … = (\tfrac{1}{ε}-\tfrac{ε}{6})n - \tfrac{1}{ε} + \tfrac{ε}{6n}.
	\end{align*}
	We now turn our attention to general $c > 1$ again. Our goal is to approximate the sum by an integral. To bound the approximation error effectively, we will first show that $f_c$ is monotonic. For this let us examine the fractional term $g_c(x)$ occuring in $f_c(x)$.

	\begin{align}
		g_c(x) & := \frac{1-x^{⌊c⌋}(1-ε+εx)}{1-x^{⌊c⌋-1}(1-ε+εx)}
		= 1+\frac{(x^{⌊c⌋-1}-x^{⌊c⌋})(1-ε+εx)}{1-x^{⌊c⌋-1}(1-ε+εx)}\notag                     \\
		       & = 1+\frac{x^{⌊c⌋-1}(1-x)(1-ε+εx)}{1-x^{⌊c⌋-1}+εx^{⌊c⌋-1}(1-x)}
		= 1+\frac{x^{⌊c⌋-1}(1-x)(1-ε+εx)}{(1-x)\sum_{i = 0}^{⌊c⌋-2}x^i+εx^{⌊c⌋-1}(1-x)}\notag \\
		       & = 1+\frac{1-ε+εx}{\sum_{i = 1}^{⌊c⌋-1}x^{-i}+ε}.\label{eq:g-rearranged}
	\end{align}
	From (\ref{eq:g-rearranged}) we can see that $g_c(x)$ does not have a singularity at $x = 1$ and by setting $g_c(1) = f_c(1) = 1+\frac{1}{c-1} = \frac{c}{c-1}$ both functions are continuous on $[0,1]$. We also see from (\ref{eq:g-rearranged}) that $g_c(x)$ is increasing in $x$ and decreasing in $c$ (recall that $c$ determines $ε = c-⌊c⌋$). Because the other factor $x^{⌊c⌋-1}(1-ε+εx)$ of $f_c(x)$ is also increasing in $x$ and decreasing in $c$, we conclude that $f_c(x)$ as a whole is increasing in $x$ and decreasing in $c$.
	This makes the upper and lower sums bounding the integral of $f_c$ particularly simple:
	\[
		\sum_{i=1}^{n-1} f_c(\tfrac in) = \sum_{i=0}^{n-1} f_c(\tfrac in) ≤ n\int_0^1 f_c(x) \,dx ≤ \sum_{i=1}^{n} f_c(\tfrac{i}{n}) = f_c(1) + \sum_{i=1}^{n-1} f_c(\tfrac in) = \tfrac{c}{c-1} + \sum_{i=1}^{n-1} f_c(\tfrac in).
	\]

	By rearranging we obtain (iii). We now turn to the cruder bound (iv). We begin with $c ∈ ℕ \setminus \{1\}$. In this case we have $f_c(x) = x^{c-1}·g_c(x)$ where $1 ≤ g_c(x) ≤ \frac{c}{c-1}$ for all $x ∈ [0,1]$. This gives:
	\[
		\tfrac{1}{c} = \int₀¹ x^{c-1}\,dx ≤ \int₀¹ f_c(x)\,dx
		≤ \int₀¹ x^{c-1}\tfrac{c}{c-1}\,dx = \tfrac{c}{c-1} \int₀¹x^{c-1}\,dx = \tfrac{1}{c-1}.
	\]
	Combining this with (iii) gives $\frac{n}{c} - \frac{c}{c-1} ≤ 𝔼[E] ≤ \frac{n}{c-1}$ when $c ∈ ℕ \setminus \{1\}$. When $c ∉ ℕ$ we can use that $f_c(x)$ is decreasing in $c$ and round conservatively to obtain the claimed result.
\end{proof}
\begin{theorem}
	\label{cor:rank-error-concentration}
	In the $2$-MultiQueue\footnote{A similar analysis for $c = 1+ε$ is also possible.}, the highest rank error observed over a polynomial number of deletions is in $\mathcal{O}(n\log n)$ with high probability.
\end{theorem}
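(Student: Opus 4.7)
The plan is to combine a Bernstein-type tail bound on the stationary rank-error distribution with a union bound over the polynomially many deletion steps, handling the transient behaviour via a Foster--Lyapunov drift argument.

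For the stationary tail bound, \cref{thm:main} together with \cref{lem:best-of-c-satisfies-☆} gives (for $c = 2$) that $E \leq r_n - 1 = \sum_{j=1}^{n-1}\delta_j$, where the $\delta_j \sim \Geom_1(p_j)$ are independent with $p_j = (n-j)/(2n-j)$. Reindexing $k = n - j$ yields $\mathbb{E}[\delta_{n-k}] = (n+k)/k$, so $\mathbb{E}[r_n] = nH_{n-1} + (n-1) = \Theta(n\log n)$, $\operatorname{Var}(r_n) = O(n^2)$, and $\max_j 1/p_j = n+1$. Dominating each $\delta_j$ by $1 + \Exp(p_j)$ and applying Bernstein's inequality for sums of independent sub-exponential random variables, I expect to obtain
\[
	\Pr[E \geq Cn\log n] \leq \exp\bigl(-\Omega(C\log n)\bigr) = n^{-\Omega(C)}
\]
for every sufficiently large constant $C$. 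A union bound over $T = n^k$ deletion steps then yields $\Pr[\max_{s \leq T} E^{(s)} \geq Cn\log n] = o(1)$ for $C$ chosen large enough in terms of $k$, which is the desired conclusion.

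The union bound above requires the tail bound to hold at every time $s$, not only at stationarity. The initial configuration from a uniform random partition of $\mathbb{N}$ induces $\delta_i^{(0)} \sim \Geom_1((n-i)/n)$ independently, which already satisfies $r_n^{(0)} = O(n\log n)$ whp by the same Bernstein argument (with more favourable parameters). To propagate this bound forward along the trajectory, I would use a Foster--Lyapunov drift argument with a Lyapunov function such as $V(\dvec) = \sum_j d_j / p_j$: a clean one-step MGF bound on the increment of $V$ (exploiting the near-geometric stationary structure) combined with Hajek's inequality would show that the chain stays inside a polynomial sublevel set of $V$, and hence inside $\{r_n \leq Cn\log n\}$, throughout polynomial time with high probability.

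The main obstacle is this transient analysis. The Markov chain from \cref{lem:stationary} is \emph{not} stochastically monotone in $\dvec$---for example, with $n = 2$ one checks that starting from $d_1 = 0$ is more likely to reach a positive state in one step than starting from $d_1 = 1$---so the natural monotone-coupling route fails. The drift-based approach sidesteps this issue but requires a careful one-step MGF estimate for $V$, which I expect to be tractable but is the most delicate ingredient of the proof.
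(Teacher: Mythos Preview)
Your stationary tail bound is essentially the paper's argument in different clothing: the paper invokes Janson's tail bound for sums of independent geometric random variables applied to $X=r_n-1$, obtaining $\Pr[X>k\mu]\le\exp(-p_*\mu(k-1-\log k))$ with $p_*=\Theta(1/n)$ and $\mu=\Theta(n\log n)$, hence $\Pr[E>kn\log n]\le n^{-\Theta(k)}$, and then union-bounds over $n^c$ steps. Your Bernstein route with sub-exponential parameters $(\sigma^2,b)=(O(n^2),O(n))$ lands on the same $n^{-\Omega(C)}$ tail, so the core step and the union bound match.

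Where you differ is the transient analysis. The paper does not do one: it applies the stationary bound from \cref{thm:main} at every step without further comment, tacitly treating the system as if it were already in stationarity. You are right that \cref{thm:main} only asserts convergence in distribution, so a clean finite-time statement strictly speaking needs more. Your observation that the chain is not stochastically monotone (so the obvious coupling fails) is correct, and the Foster--Lyapunov route you sketch is a reasonable way to close the gap; the one-step MGF control of $V$ is indeed the non-trivial piece you have not yet carried out. In short, your proposal is not less than the paper's proof but more: you reproduce its argument and then identify and attempt to repair an informality that the paper leaves unaddressed.
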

\begin{proof}
	We will use a tail bound by Janson~\cite[Theorem 2.1]{jansonTailBoundsSums2018} on the sum of geometrically distributed random variables from. It reads
	\[
		\Pr[X\geq k\cdot\mu] \leq e^{-p_*\mu(k-1-\log k)} \text{ for any } k\geq 1,
	\]
	where $X$ is a sum of geometrically distributed random variables (with possibly differing success probabilities), $p_*$ is the smallest success probability and $\mu = 𝔼[X]$.

	We apply this bound to $X = rₙ-1$, which has the required form by \cref{thm:main}. It is easy to check that $p_* = \delta_{n-1} = \tfrac{1}{n+1} = Θ(\tfrac{1}{n})$ and
	\[
		\mu=𝔼[X] = \sum_{i=1}^{n-1} \frac{1-(1-\tfrac in)^2}{1-(1-\tfrac in)^2-\tfrac in} = \sum_{i=1}^{n-1} \frac{2-\tfrac in}{1-\tfrac in} = n-1 + n\sum_{i=1}^{n-1} \tfrac 1i = Θ(n\log n).
	\]
	Since the rank error $E$ clearly satisfies $E ≤ rₙ-1 = X$ we have for $k$ large enough
	\[ \Pr[E > k·n\log n] ≤ \Pr[X > k·n\log n] = e^{-Θ(\frac{1}{n}n\log(n)k)} = n^{-Θ(k)}.\]
	By a union bound, the probability that we observe a rank error exceeding $k·n\log n$ at some point during $n^c$ deletions is at most $n^c·n^{-Θ(k)}$.
	For large enough $k$, this probability is still small.
\end{proof}

To give some intuition for how quickly the $2$-MultiQueue converges to its stable state and for how close the observed ranks of top-elements are to their expectation we provide experimental data in \cref{fig:rank-convergence}.

\begin{figure}
	\centering
	\includegraphics{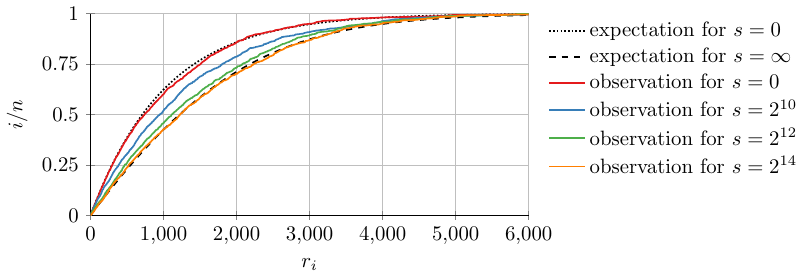}
	\caption{%
	The convergence of the $2$-MultiQueue with $n = 2^{10}$ to its stable state. After $s$ deletions, let $r_i^{(s)}$ be the rank of queue $i$. The plot shows the observed ranks $i ↦ r_i^{(s)}$ for some values of $s$, as well as the expected ranks $i ↦ 𝔼[r_i^{(0)}]$ and $i ↦ 𝔼[r_i^{(∞)}]$ in the initial state and the converged state.
	}
	\label{fig:rank-convergence}
\end{figure}

\section{The Exponential-Jump Process}
\label{sec:EJP}

Recall the definition of the exponential-jump process introduced in \cref{sec:models}:
\EJP
Before analysing it, we briefly outline its connection to the MultiQueue as established by Alistarh~et~al.\ \cite{alistarhPowerChoicePriority2017}.
\subparagraph{Connection to the MultiQueue.}
In our MultiQueue model, our choice for the set $X = \{x₁ < x₂ < … \}$ of elements is irrelevant (as long as the set is well-ordered), as its choice does not influence the distribution of the sequence of rank errors.

Alistarh~et~al.\ propose choosing $X$ \emph{randomly} as a Poisson point process on $ℝ₊$ with rate $n$, meaning the gaps $x_{i+1} - x_i$ are independent with distribution $\Exp(n)$.
Then the sets of elements in each queue are independent Poisson point process with rate $1$.
This is convenient for two reasons: (i) we can reveal the elements of each queue on the fly by revealing the $\Exp(1)$-distributed delays one by one and (ii) learning something about the content of some queue tells us nothing about the content of other queues.
The evolution of the sequence $t₁ < … < tₙ$ of top-elements of the $n$ queues is then exactly the EJP.

Note that observing the EJP as a proxy for the MultiQueue does not permit us to observe the rank errors right away because when deleting a top-element $t_i$ the number of smaller elements has not yet been revealed (unless $i = 1$). This is, however, not a problem as we know the \emph{distribution} of this number. For details refer to \cite{alistarhPowerChoicePriority2017}.

We decided against basing our analysis of the MultiQueue on the EJP only because it requires a non-discrete probability space, which makes for a less accessible discussion.

\subsection{Warm-Up: The Can-Kicking Process}

To get some intuition, let us consider the simple case for the exponential-jump process where $σ₁ = 1$ and $σ₂ = … = σₙ = 0$, meaning the left-most token always jumps.
The random transition can be summarised as follows:
\[ \underset{\text{(sorted)}}{(t₁,…,tₙ)} \rightsquigarrow \mathrm{sort}(t₁+X,t₂,…,tₙ)\text{ where $X \sim \Exp(1)$}. \tag{CKP}\]
We call this the \emph{can-kicking process}.
The metaphor is illustrated in \cref{fig:can-kicking}.
We can already observe the property of the general case:
In the long run, the distances $t_{i+1}-t_i$ are independent and follow an exponential distribution.

\begin{figure}
	\centering
	\includegraphics[scale=0.9]{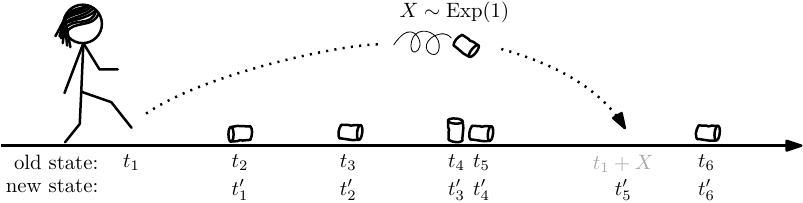}
	\caption{%
		A person (the can-kicker) walks along the real number line from left to right.
		Whenever she reaches one of the $n$ tokens (the cans) then she kicks it such that it lands some distance $X \sim \Exp(1)$ to the right.%
	}
	\label{fig:can-kicking}
\end{figure}

\begin{observation}
	Let $t₀$ be a position to the right of the initial position of all cans.
	Let $t₁ < … < tₙ$ be the positions of the cans when the can-kicker reaches $t₀$.
	Then the distances $d_i = t_{i+1} - t_i$ for $i = \{0,…,n-1\}$ are independent and $d_i \sim \Exp(n-i)$.
\end{observation}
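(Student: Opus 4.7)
The plan is to decouple the $n$ cans by attaching an independent sequence of kick distances to each can. For $k \in [n]$ and $j \geq 1$, let $Y_k^{(j)} \sim \Exp(1)$ be mutually independent, where $Y_k^{(j)}$ represents the distance of the $j$th kick received by can $k$. This reorganisation of the randomness is equivalent in distribution to the CKP: the position of can $k$ after it has been kicked $j$ times equals $s_k + \sum_{i=1}^j Y_k^{(i)}$, where $s_k$ is its initial position.

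The crucial observation is that by the time the can-kicker first reaches $t_0$, can $k$ has been kicked exactly
\[ T_k := \min\bigl\{j \geq 1 : s_k + \sum_{i=1}^j Y_k^{(i)} > t_0\bigr\} \]
times. One direction holds because the can-kicker, walking continuously rightward, kicks can $k$ whenever her position meets that of the can; she therefore kicks it at every intermediate state of its random walk that lies in $[s_k,t_0]$. The converse holds because once the walk first lands past $t_0$, the can is out of reach, since the can-kicker herself does not walk past $t_0$.

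Consequently, the position of can $k$ at time $t_0$ equals $t_0 + Z_k$, where $Z_k := s_k + \sum_{i=1}^{T_k} Y_k^{(i)} - t_0$ is the \emph{overshoot} of a renewal process with $\Exp(1)$ increments above the threshold $t_0$. Memorylessness of the exponential distribution then yields $Z_k \sim \Exp(1)$. Since the renewal processes of different cans involve disjoint collections of the variables $Y_k^{(j)}$, the overshoots $Z_1,\ldots,Z_n$ are mutually independent. Hence at time $t_0$ the $n$ cans occupy i.i.d.\ positions, each distributed as $t_0 + \Exp(1)$.

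It remains only to sort. The positions $t_1 < \ldots < t_n$ are the order statistics of $n$ i.i.d.\ $\Exp(1)$ samples (shifted by $t_0$), and the standard formula for exponential order statistics immediately delivers $d_i \sim \Exp(n-i)$ independently for $i \in \{0,\ldots,n-1\}$, establishing the observation. The main technical hurdle will be making the identity ``can $k$ is kicked exactly $T_k$ times'' fully rigorous, which amounts to carefully tracking the can-kicker's walk against each can's random walk; measure-zero edge cases (two cans coinciding, or a kick landing exactly at $t_0$) are trivial to handle and do not affect the distributional conclusion.
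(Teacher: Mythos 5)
Your proposal is correct and uses the same two essential ingredients as the paper's sketch: memorylessness of the exponential distribution (to argue that each can's overshoot past $t_0$ is $\Exp(1)$ and that the $n$ overshoots are mutually independent), and the distributional structure of exponential order statistics (which the paper establishes by a short induction combining $\min$ of i.i.d.\ exponentials with repeated applications of memorylessness, and which you instead invoke as a known fact, the R\'enyi representation). Your reformulation via pre-attached per-can kick sequences $Y_k^{(j)}$ and the stopping time $T_k$ is a cleaner way to make the ``i.i.d.\ overshoots'' step rigorous, but it is not a genuinely different route.
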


Since the proof is standard we focus on giving the intuition.
\begin{proof}[Proof Sketch.]
	We require two ingredients. The first is that for independent $X₁,…,Xₙ \sim \Exp(1)$ we have $\min(X₁,…,Xₙ) \sim \Exp(n)$. This follows from:
	\[ \Pr[\min(X₁,…,Xₙ) ≥ t]
		= \prod_{i = 1}^n \Pr[X_i ≥ t] = \prod_{i = 1}^n e^{-t} = e^{-tn}
		= \Pr_{Y \sim \Exp(n)}[Y ≥ t].\]
	The second ingredient is the memorylessness of the exponential distribution, i.e., if we have $X \sim \Exp(1)$ then conditioned on $X ≥ t$ we have $X-t \sim \Exp(1)$.

	Now assume the can-kicker reaches $t₀$.
	Consider for any can the last time that it was kicked.
	It was kicked an $\Exp(1)$-distributed distance, but since we know it has passed position $t₀$, the memorylessness allows us to assume that it was kicked an $\Exp(1)$-distributed distance \emph{from $t₀$}.
	This is true independently for all $n$ cans.

	By our first ingredient the minimum distance $t₁-t₀$ travelled by one of the cans has distribution $\Exp(n)$ as claimed.
	For the other $n-1$ cans we know that they have passed position $t₁$ and we may, again by memorylessness, assume that they were kicked an $\Exp(1)$-distributed distance from $t₁$.
	The claim now follows by induction.
\end{proof}

\subsection{A Stationary Distribution for the Exponential-Jump Process}

Given a state transition rule (formally a \emph{Markov kernel}) such as (EJP), a stationary distribution is a distribution $π$ such that if we sample a state $X \sim π$ and obtain a successor state $X'$ by applying the rule then we find $X' \sim π$.

The transition rule (EJP) itself does not admit a stationary distribution for the simple reason that (EJP) does not “renormalise” its state, so the tokens drift off to infinity.
Like in the can-kicking process we should consider the differences $(d₁,…,d_{n-1})$ with $d_i = t_{i+1}-t_i$.
It should be clear that (EJP) induces a transition rule on these differences.
This uses that $(d₁,…,d_{n-1})$ contains all the information of $(t₁,…,tₙ)$ except for an offset of all values, and (EJP) is symmetric under translation.
In this section we focus on the states of this induced Markov chain and find a stationary distribution for $(d₁,…,d_{n-1})$.

\ejpthm*
In particular, in the stable distribution the distances $(d₁,…,d_{n-1})$ are independent.

\subparagraph{Billard Ball Semantics and Transitional States.}
So far we have imagined that if the $i$th token is selected in the EJP, then it instantaneously jumps to the right, potentially overtaking other tokens doing so.
This requires relabeling the tokens after the jump, which is a formally awkward operation.
We will therefore adopt a different, though equivalent, description that makes the following analysis more intuitive.

We imagine the tokens as billard balls on the real number line (with negligible radius).
Like before, they are numbered from left to right, some ball $i^* \sim σ$ is selected and $X \sim \Exp(1)$ is sampled.
Ball $i^*$ then travels a distance of $X$ to the right, \emph{unless} it runs into ball $i^*+1$ after some distance $δ < X$.
In that case it instantly stops and transfers its remaining momentum to ball $i^*+1$, which then travels the remaining distance $X - d$, unless running into ball $i^*+2$ and so on, until a total distance of $X$ has been traversed.
It should be clear that the \emph{set} of positions where a ball comes to rest is not affected by introducing collisions in this way.
For a given state $\dstart ∈ ℝ_{≥0}^{n-1}$, a given $i^* ∈ [n]$ and $X ∈ ℝ_{≥ 0}$, it is useful to define a set $𝒯(\dstart,i^*,X) ⊆ ℝ_{≥ 0}^{n-1} × [n]$ of \emph{transitional states}, containing a pair $(\dvec,i)$ if and only if at some point during the transition just described, the distances of the $n$ balls are given by $\dvec$ and $i$ is the index of the ball that is moving.
An example is given in \cref{fig:billiard-ball-model}.

\begin{figure}
	\centering
	\includegraphics{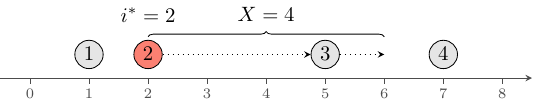}
	\caption{%
	Illustration of a state transition with billiard ball semantics.
	The set of transitional states is $𝒯((1,3,2),2,4) = \{((1+x,3-x,2),2) \mid x ∈ [0,3]\} ∪ \{((4,x,2-x),3) \mid x ∈ [0,1]\}$.
	}
	\label{fig:billiard-ball-model}
\end{figure}

We now describe the \emph{intensity}\footnotemark\ with which certain transitional states occur.
\footnotetext{%
	We believe the argument can be understood by readers unfamiliar with the notion of intensity measures.
	It suffices to understand that an intensity function generalises the familiar notion of a density function: When throwing a dart at a dart board, then the value $f(x)$ of a density function $f$ at a point $x ∈ ℝ²$ captures the probability that an infinitesimal area around $x$ is hit by the dart.
	An intensity function could do the same thing for throwing some $1$-dimensional curve rather than a $0$-dimensional dart.
	In our case, the set $𝒯(\dstart,i,x)$ of transitional states is a random $1$-dimensional curve in a $n-1$ dimensional space.%
}

\begin{lemma}
	\label{lem:intensity}
	The intensity $η(\dvec,i) ∈ ℝ_{≥ 0}$ with which a pair $(\dvec,i) ∈ ℝ_{≥ 0}^{n-1} × [n]$ is among the transitional states $𝒯(\dstart,i^*,x)$ when $(\dstart,i^*,x) \sim π \otimes σ \otimes \Exp(1)$ satisfies $η(\dvec,i) = \frac 1n f_π(\dvec)$ where $f_π$ is the density function of $π$.
\end{lemma}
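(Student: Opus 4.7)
The plan is to compute $\eta(\dvec, i)$ directly by conditioning on the initially activated ball $i^*$, in the spirit of the proof of \cref{lem:geom-probability}. For each $i^* \in \{1, \dots, i\}$, the transitional state $(\dvec, i)$ can only be visited during phase $k := i - i^*$ of the billiard trajectory, meaning after exactly $k$ collisions. During that phase only the coordinates $d_{i-1}$ and $d_i$ vary (at unit rate), while the original $\dstart$ can be reconstructed from $\dvec$ by undoing the ``shift'' caused by the $k$ collisions; one coordinate of $\dstart$ is left underdetermined, and I parameterise it by $u \geq 0$, the distance ball $i^*$ travelled before its first collision.

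The first step is to set up the change of variables $(\dstart, s) \mapsto (\dvec, u)$, where $s$ denotes the time elapsed in the current phase $k$. A direct computation shows the Jacobian has absolute value $1$ (the matrix row-reduces to a permutation). Introducing $\lambda_j := n\sigma_{\upto j} - j$, so that $f_\pi(\dvec) = \prod_j \lambda_j e^{-\lambda_j d_j}$, and using the identity $\lambda_{j+1} + 1 - \lambda_j = n\sigma_{j+1}$, the product $f_\pi(\dstart) \cdot e^{-t}$ (where $t = u + \sum_{j=i^*}^{i-1} d_j$ is the total distance travelled so far, and $e^{-t}$ is the $\Exp(1)$ survival factor) collapses to
\[
	f_\pi(\dstart)\,e^{-t} \;=\; f_\pi(\dvec) \cdot \exp\!\Bigl[-n\sigma_{i^*}\,u \;-\; n \sum_{j=i^*+1}^{i} \sigma_j\, d_{j-1}\Bigr].
\]

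Integrating $u$ over $[0, d_{i^*-1}]$ (or over $[0, \infty)$ when $i^* = 1$, since then there is no $d_{i^*-1}$) and weighting by $\sigma_{i^*}$, the contribution from activating ball $i^*$ becomes $\tfrac{1}{n} f_\pi(\dvec) \bigl[ A(i^* + 1) - A(i^*) \bigr]$, where $A(j) := \exp\!\bigl[-n \sum_{\ell=j}^{i} \sigma_\ell\, d_{\ell-1}\bigr]$ with the conventions $A(1) = 0$ (absorbing the $i^* = 1$ case) and $A(i+1) = 1$. Summing over $i^* = 1, \dots, i$ telescopes to $\tfrac{1}{n} f_\pi(\dvec) \cdot A(i+1) = \tfrac{1}{n} f_\pi(\dvec)$, which is the claim.

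The main obstacle is the careful bookkeeping: correctly reconstructing $\dstart$ from $(\dvec, u)$ under the ``shift'' pattern induced by the $k$ collisions, verifying the Jacobian, and handling the two boundary cases ($i^* = 1$, where there is no $d_{i^*-1}$ to bound $u$; and $i = n$, where there is no $d_i$ to decrease in phase $k$). Both boundary cases work out to the same telescoping formula, so the final answer $\eta(\dvec,i) = \tfrac{1}{n} f_\pi(\dvec)$ is uniform in $i$.
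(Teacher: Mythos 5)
Your proof is correct, but it takes a genuinely different route from the paper's. The paper argues by induction on $i$: the base case is $i=1$, and for the inductive step it decomposes the ways a transitional state $(\dvec,i)$ can arise into only two cases --- either ball $i$ was directly activated (contributing $\tfrac{f_\pi(\dvec)}{n}(1-e^{-n\sigma_i d_{i-1}})$), or ball $i-1$ collided with ball $i$, which is handled by invoking the induction hypothesis $\eta(\cdot,i-1)=\tfrac 1n f_\pi(\cdot)$ and a memorylessness factor $e^{-n\sigma_i d_{i-1}}$. You instead condition on the originally activated ball $i^*\in\{1,\dots,i\}$, unwind the entire billiard trajectory back to $\dstart$ via an explicit change of variables (whose unit Jacobian you verify), collapse $f_\pi(\dstart)e^{-t}$ using the identity $\lambda_{j+1}-\lambda_j+1 = n\sigma_{j+1}$, and telescope the resulting sum. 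Unrolling the paper's induction would produce essentially your telescope, so the two arguments are closely related at their core; the paper's version is tidier (one base case and one inductive step, no Jacobian to check, no $i$-fold sum), while yours makes the telescoping structure and the bookkeeping fully explicit in a single pass. One small wording slip: for $i^*=i$ (phase $k=0$, no collision yet) your free parameter $u$ must be the distance ball $i$ has travelled \emph{so far}, not ``the distance before its first collision''; with that reading, $t=u+\sum_{j=i^*}^{i-1}d_j$ and the range $u\in[0,d_{i^*-1}]$ remain consistent and the telescope $A(i^*+1)-A(i^*)$ still covers this term.
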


This lemma almost immediately implies \cref{thm:ejp-stationary}.
\begin{proof}[Proof of \cref{thm:ejp-stationary}.]
	The memorylessness of the exponential distribution means we can think of the billard balls as traveling at a constant speed and stopping with \emph{rate} of $1$.
	Intuitively, the “probability” of stopping at a certain state (formalised as a density function) is proportional to the “probability” of traversing the state (formalised as an intensity function).

	Assume that we transition from $\dstart \sim π$ and consider an arbitrary $\dvec$.
	The combined intensity with which we traverse one of the transitional states $(\dvec,1),…,(\dvec,n)$ associated with $\dvec$ is, by \cref{lem:intensity}, simply $\sum_{i = 1}^n η(\dvec,i) = f_π(\dvec)$.
	And hence, the density with which $\dvec$ occurs as the successor of $\dstart$ must also be $f_π(\dvec)$.
\end{proof}

The proof of the lemma uses an induction.
\begin{proof}[Proof of \cref{lem:intensity}.]
	In the following, let $\dvec = (d₁,…,d_{n-1})$.
	First consider $f_π$.
	Since $π$ is a direct product of exponential distributions, $f_π$ also arises as the product of the underlying density functions:
	\[ f_π(\dvec) = \prod_{i = 1}^{n-1} \frac{1}{λ_i} e^{-λ_i·d_i}
		\text{ where } λ_i = nσ_{\upto i}-i.
	\]
	Denoting the standard unit vectors by $\unit{1},…,\unit{n-1}$ we obtain, similar to \cref{obs:shifting-geometric}, that $f_π(\dvec+\unit{i}·x) = f_π(\dvec)·e^{-λ_i·x}$ for any $\dvec ∈ ℝ_{≥0}^{n-1}$, any $i ∈ [n-1]$ and any $x ∈ ℝ_{≥ 0}$.

	We can now begin the induction.
	For $i = 1$ we have to argue about transitional states $(\dvec,1)$ where the first ball is moving.
	It can only occur if ball $1$ was selected (meaning $i^* = 1$). Ball $1$ must have started some distance $x ∈ ℝ_{≥ 0}$ to the left of its position in $\dvec$ (meaning $\vec{d}₀ = \dvec + \unit{1}·x$), and must have travelled at least that distance (probability $e^{-x}$).
	This gives:
	\begin{align*}
		η(\dvec,1) & = \int_0^∞ σ₁f_π(\dvec+\unit{1}·x)·e^{-x}\, dx
		= \int_0^∞ σ₁f_π(\dvec)·e^{-λ₁x}·e^{-x}\, dx                \\
		           & = σ₁f_π(\dvec)\int_0^∞ e^{-(λ₁+1)x}\, dx
		= \frac{σ₁}{λ₁+1}f_π(\dvec) = \frac{1}{n}f_π(\dvec)
	\end{align*}
	where the last step uses the definition of $λ₁$ and $σ_{\upto 1} = σ₁$.

	For the induction step, consider any transitional state $(\dvec,i)$ with $2 ≤ i < n$.
	It can arise in two ways.
	The first possibility is that ball $i$ is the first ball to move (i.e., $i^* = i$).
	Then our computation is analogous to the base case, except that the travelled distance is limited to $x ∈ [0,d_{i-1}]$.
	Using $λ_i - λ_{i-1} + 1 = nσ_i$, this gives a contribution of:
	\begin{align*}
		 & \int_0^{d_{i-1}} σ_if_π(\dvec+(\unit{i}-\unit{i-1})·x)·e^{-x}\, dx
		= σ_if_π(\dvec) \int_0^{d_{i-1}} e^{-(λ_i-λ_{i-1}+1)x}\, dx           \\
		 & = σ_if_π(\dvec) \int_0^{d_{i-1}} e^{-nσ_ix}\, dx
		= σ_if_π(\dvec) · \frac{1- e^{-nσ_id_{i-1}}}{nσ_i}
		= \frac{f_π(\dvec)}{n} · (1- e^{-nσ_id_{i-1}}).
	\end{align*}
	The second possibility is that ball $i$ moves because it was hit by ball $i-1$ during the transitional state $\dvec+(\unit{i}-\unit{i-1})·d_{i-1}$.
	By induction we already know the intensity with which this transitional state occurs and need only factor in that, from there, ball $i$ must have travelled a distance of at least $d_{i-1}$.
	This gives a contribution of:
	\begin{align*}
		η(\dvec+(\unit{i}-\unit{i-1})·d_{i-1}, i-1)·e^{-d_{i-1}}
		\stackrel{\text{Ind.}}{=}
		\frac{f_π(\dvec+(\unit{i}-\unit{i-1})·d_{i-1})}{n}·e^{-d_{i-1}} \\
		= \frac{f_π(\dvec)}{n}·e^{-(λ_i-λ_{i-1}+1)·d_{i-1}}
		= \frac{f_π(\dvec)}{n}·e^{-nσ_i·d_{i-1}}
	\end{align*}
	The two contributions sum up to the desired result:
	\[
		η(\dvec,i) = \frac{f_π(\dvec)}{n} · (1- e^{-nσ_id_{i-1}}) + \frac{f_π(\dvec)}{n}·e^{-nσ_i·d_{i-1}} = \frac{f_π(\dvec)}{n}.
	\]
	The last case with $i = n$ works in the same way if we imagine a virtual distance $d_n = ∞$  to the right of the last ball and a virtual parameter $λ_n = 0$.
\end{proof}

\subsection{The Convergence Condition}
We now briefly justify the intuition that $(☆)$ is necessary for the convergence of the processes we consider.
\begin{lemma}
	\label{lem:non-convergence}
	Let $σ$ be a distribution violating $(☆)$. Then $𝔼[tₙ^{(s)}-t₁^{(s)}]$ diverges as $s → ∞$ where $(t₁^{(s)},…,tₙ^{(s)})$ are states of the EJP.
\end{lemma}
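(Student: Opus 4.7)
My approach partitions the $n$ tokens by rank. Let $k \in [n-1]$ be an index where $(☆)$ fails, so $σ_{\upto k} \leq k/n$. Define $L^{(s)} = \sum_{i=1}^k t_i^{(s)}$ and $R^{(s)} = \sum_{i=k+1}^n t_i^{(s)}$, the sums of positions in the left and right groups of tokens. Since exactly one token moves per step by an amount of mean~$1$, we always have $𝔼[\Delta(L+R) \mid \text{state}] = 1$.

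The key step is a one-step analysis of $\Delta L$. The sum $L$ changes only when a rank $j \leq k$ is selected (probability $σ_j$); the token at $t_j$ then moves by $X \sim \Exp(1)$. Writing $g_j := t_{k+1} - t_j$, either $X < g_j$ and the token stays in the left group (so $\Delta L = X$), or $X \geq g_j$ and the token joins the right group while $t_{k+1}$ slides into its former rank (so $\Delta L = g_j$). Either way the contribution is $\min(X, g_j)$, with expectation $1 - e^{-g_j}$. Summing,
\[
	𝔼[\Delta L \mid \text{state}] = \sum_{j=1}^k σ_j (1 - e^{-g_j}) \leq σ_{\upto k},
\]
hence $𝔼[\Delta R \mid \text{state}] \geq 1 - σ_{\upto k}$, and dividing by the group sizes yields
\[
	𝔼[\Delta(R/(n-k) - L/k) \mid \text{state}] \geq \frac{1-σ_{\upto k}}{n-k} - \frac{σ_{\upto k}}{k} = \frac{k - nσ_{\upto k}}{k(n-k)} \geq 0.
\]

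If the violation is strict, $σ_{\upto k} < k/n$, this drift is a positive constant, so $𝔼[R^{(s)}/(n-k) - L^{(s)}/k]$ grows linearly in $s$; since this quantity is bounded above by $t_n^{(s)} - t_1^{(s)}$, the lemma follows. The main obstacle is the boundary case $σ_{\upto k} = k/n$, which I would handle by contradiction: assume $\sup_s 𝔼[t_n^{(s)} - t_1^{(s)}] \leq B < \infty$. Using $g_j \leq t_n - t_1$ and Jensen's inequality applied to the convex function $x \mapsto e^{-x}$ gives $𝔼[e^{-g_j^{(s)}}] \geq e^{-B}$, strengthening the bound above to $𝔼[\Delta L] \leq σ_{\upto k}(1 - e^{-B})$. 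Substituting $σ_{\upto k} = k/n$ and simplifying, the drift of the normalized gap becomes at least $e^{-B}/(n-k) > 0$, so $𝔼[R^{(s)}/(n-k) - L^{(s)}/k]$ again grows linearly, contradicting the uniform bound $B$ on $𝔼[t_n^{(s)} - t_1^{(s)}]$.
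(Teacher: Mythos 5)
Your decomposition into a left group of ranks $1,\dots,k$ and a right group, followed by a drift analysis of the normalized gap, is essentially the same mechanism as the paper's: there, the mean $m_i^{(s)}$ of the first $i$ positions is compared to the overall mean $m_n^{(s)}$, using that each jump contributes $1$ to the total in expectation but at most $\sigma_{\upto i}$ to the left group (collisions only hurt the left group). Both your $R^{(s)}/(n-k)-L^{(s)}/k$ and the paper's $m_n^{(s)}-m_i^{(s)}$ are sandwiched between $0$ and $t_n^{(s)}-t_1^{(s)}$ and have drift proportional to $k-n\sigma_{\upto k}$, so the strict case works identically. Where you genuinely depart from the paper is the boundary case $\sigma_{\upto k}=k/n$: the paper's sketch argues heuristically that the gap dominates a reflected unbiased random walk (reflections being identified with ball collisions) and hence has diverging expectation, which is intuitive but not a complete argument. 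Your Jensen contradiction is cleaner and more rigorous — assuming $\mathbb{E}[t_n^{(s)}-t_1^{(s)}]\le B$ forces $\mathbb{E}[e^{-g_j^{(s)}}]\ge e^{-B}$, recovering a strictly positive drift $e^{-B}/(n-k)$, which is a genuine improvement over the paper's hand-waving. One small polish: as written your contradiction establishes $\sup_s\mathbb{E}[t_n^{(s)}-t_1^{(s)}]=\infty$ (unboundedness) rather than divergence. This is easily repaired: since the drift is always non-negative, $V^{(s)}:=\mathbb{E}\bigl[R^{(s)}/(n-k)-L^{(s)}/k\bigr]$ is non-decreasing. If $V^{(s)}\to\infty$ you are done via $V^{(s)}\le\mathbb{E}[t_n^{(s)}-t_1^{(s)}]$; if instead $V^{(s)}$ converges, its increments, which equal $\frac{n}{k(n-k)}\sum_{j\le k}\sigma_j\mathbb{E}[e^{-g_j^{(s)}}]$ in the boundary case, must vanish, so $\mathbb{E}[e^{-g_j^{(s)}}]\to 0$ for some $j\le k$ with $\sigma_j>0$, and Jensen then gives $\mathbb{E}[g_j^{(s)}]\to\infty$, hence $\mathbb{E}[t_n^{(s)}-t_1^{(s)}]\to\infty$.
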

This also implies that the expected rank $rₙ^{(s)}$ of the $n$th top-element and the expected rank error of the MultiQueue diverge. We omit the details.
\begin{proof}[Proof sketch.]
	In every step, the balls move a combined expected distance of $1$ and hence the mean $m_n^{(s)}$ of the ball positions $t₁^{(s)},…,tₙ^{(s)}$ increases by $\frac{1}{n}$ in expectation per step.
	Similarly, the first $i$ balls together are scheduled for a movement of $σ_{\upto i}$ in expectation, though this movement may be cut short if ball $i$ collides with ball $i+1$.
	Therefore, the mean $m_i^{(s)}$ of the first $i$ balls moves by \emph{at most} $σ_{\upto i}/i$ in expectation.
	In particular, if $σ_{\upto i} < \frac{i}{n}$ then $m_i^{(s)}$ increases by less than $\frac{1}{n}$ in expectation and will inevitably fall behind of $m_n^{(s)}$ in the long run.
	This implies $t_n^{(s)} - t₁^{(s)} \rightarrow ∞$ almost surely as $s → ∞$.

	Things are more subtle if $σ_{\upto i} = \frac{i}{n}$, i.e., if $(☆)$ is only barely violated.
	Then $m_n^{(s)} - m_i^{(s)}$ can be thought of as a reflected random walk on $ℝ_{≥0}$ with no bias to either the positive or negative direction.
	Normally \emph{reflected} only means that a random walk is prevented from becoming negative.
	For such a walk its expected position diverges as the number of steps increases.
	In our case, reflection is linked to ball $i$ colliding with ball $i+1$, which ensures $m_n^{(s)} - m_i^{(s)} ≥ 0$, but which may even happen when $m_n^{(s)} - m_i^{(s)}$ is large.
	This only cause $m_n^{(s)} - m_i^{(s)}$ to increase compared to ordinary reflections.
	In particular $𝔼[m_n^{(s)} - m_i^{(s)}] → ∞$, which also implies $𝔼[tₙ^{(s)} - t₁^{(s)}] → ∞$.
\end{proof}

\subsection{A Surprise Appearance of the Logistic Function}
There is more that we can say about the relative positions of the balls in the EJP if we consider the case of large $n$.
For clarity, we stick to the 2-MultiQueues, i.e., $σ_{\upto i} = 1-(1-\frac in)²$.

Consider the position $t_i - t_{⌈n/2⌉}$ of the $i$th ball relative to the middle ball in the stationary distribution $π$.
We now switch to using a normalised index $x ∈ (0,1)$ to refer to ball $⌈xn⌉$ and consider the expectation $\tilde{t}_x := 𝔼[t_{⌈xn⌉} - t_{⌈n/2⌉}]$.
We define $gₙ : x ↦ \tilde{t}_x$ and $gₙ^{-1}: t ↦ \min\{x ∈ (0,1) \mid \tilde{t}_x ≥ t\}$ with the following interpretation:
\begin{description}
	•[\quad $gₙ(x)$] is the expected position of ball $⌈xn⌉$ relative to ball $⌈n/2⌉$.
	•[\quad $gₙ^{-1}(t)$] is the fraction of balls expected to be to the left of $t_{⌈n/2⌉}+t$.
\end{description}

\begin{corollary}
	\label{cor:logistic}
	For $n→∞$, $gₙ(x)$ converges to $g_*(x) = \log(\frac{x}{1-x})$ pointwise for $x ∈ (0,1)$.
\end{corollary}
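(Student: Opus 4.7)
The plan is to compute $g_n(x)$ explicitly via \cref{thm:ejp-stationary}, recognise it as a Riemann sum, and evaluate the limiting integral.

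First, I apply \cref{thm:ejp-stationary} with $\sigma_{\upto i} = 1 - (1-\tfrac in)^2$ (the 2-MultiQueue case). A short calculation gives
\[ n\sigma_{\upto i} - i \;=\; n - n(1-\tfrac in)^2 - i \;=\; i - \tfrac{i^2}{n} \;=\; \tfrac{i(n-i)}{n}, \]
so in the stationary distribution the differences $d_i = t_{i+1}-t_i$ are independent with $\mathbb{E}[d_i] = \tfrac{n}{i(n-i)}$.

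Next, assume without loss of generality that $x > \tfrac 12$; the case $x < \tfrac 12$ is symmetric (sum and integral both flip sign). Writing $\tilde t_x$ as a telescoping expectation yields
\[ g_n(x) \;=\; \mathbb{E}\bigl[t_{\lceil xn \rceil} - t_{\lceil n/2 \rceil}\bigr] \;=\; \sum_{i=\lceil n/2 \rceil}^{\lceil xn \rceil - 1} \frac{n}{i(n-i)} \;=\; \frac{1}{n}\sum_{i=\lceil n/2 \rceil}^{\lceil xn \rceil - 1} \frac{1}{(i/n)(1-i/n)}, \]
which I would recognise as a Riemann sum of mesh $1/n$ for the integral $\int_{1/2}^{x} \frac{dy}{y(1-y)}$. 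Since $x$ is fixed in $(0,1)$, all sampled values $i/n$ lie in a compact subinterval of $(0,1)$ on which the integrand is continuous and bounded, so the usual Riemann convergence applies; the ceiling rounding contributes only an $O(1/n)$ boundary error.

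Finally, partial fractions $\frac{1}{y(1-y)} = \frac{1}{y} + \frac{1}{1-y}$ give
\[ \int_{1/2}^{x} \frac{dy}{y(1-y)} \;=\; \bigl[\log y - \log(1-y)\bigr]_{1/2}^{x} \;=\; \log\tfrac{x}{1-x} - \log 1 \;=\; g_*(x), \]
which completes the argument. There is no real obstacle here: the only arithmetic worth double-checking is the simplification $n\sigma_{\upto i}-i = i(n-i)/n$, and the Riemann sum convergence is routine precisely because the singularities of $1/(y(1-y))$ at $0$ and $1$ are avoided by fixing $x \in (0,1)$.
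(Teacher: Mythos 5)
Your proposal is correct and follows essentially the same route as the paper: apply \cref{thm:ejp-stationary} to obtain $\mathbb{E}[d_i] = 1/(n\sigma_{\upto i}-i)$, write $g_n(x)$ as a telescoping sum, factor out $1/n$ to obtain a Riemann sum, and pass to the integral $\int_{1/2}^x \frac{dy}{y(1-y)} = \log\frac{x}{1-x}$. You are a bit more explicit than the paper about the rounding/boundary terms and about justifying the Riemann-sum convergence (the paper explicitly waves these off as negligible), but the substance is identical.
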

This also implies that $gₙ^{-1}(t)$ converges pointwise to the logistic function $g_*^{-1}(t) = \frac{e^t}{e^t+1}$, shown in \cref{fig:logistic}.
\begin{proof}
	In the following we assume $x > 1/2$, the case of $x ≤ 1/2$ is analogous.
	We are sloppy with certain rounding issues and off-by-one errors that are negligible when $n → ∞$.
	In the first line we use \cref{thm:ejp-stationary} and $𝔼[X] = 1/λ$ for $X \sim \Exp(λ)$.
	\begin{align*}
		gₙ(x)
		 & = \tilde{t}_x
		= 𝔼[t_{xn} - t_{n/2}]
		= 𝔼\big[\sum_{i = n/2}^{xn} d_i\big]
		= \sum_{i = n/2}^{xn} 𝔼[d_i]
		= \sum_{i = n/2}^{xn} \frac{1}{nσ_{\upto i}-i}                           \\
		 & = \frac 1n\sum_{i = n/2}^{xn} \frac{1}{σ_{\upto i}-\frac in}
		= \frac 1n\sum_{i = n/2}^{xn} \frac{1}{1-(1-\frac in)²-\frac in}         \\
		 & \stackrel{n→∞}{\longrightarrow} \int_{1/2}^x \frac{1}{1-(1-x)²-x} d x
		= \int_{1/2}^x \frac{1}{(1-x)x} d x = \log\big(\frac{x}{1-x}\big).\qedhere
	\end{align*}
\end{proof}

\begin{figure}
	\centering
	\includegraphics{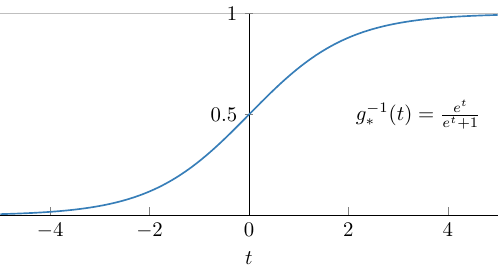}
	\caption{$g_*^{-1}(t)$ is the long-term fraction of balls expected to be to the left of $t$ in the 2-MultiQueue with $n → ∞$.
		The horizontal axis is shifted such that ball $⌈n/2⌉$ is at $t = 0$.}
	\label{fig:logistic}
\end{figure}

\section{Future Work}\label{sec:future-work}
While we have fully analyzed the long-term behavior of the MultiQueue in the deletion-only setting, there are still open questions in the general setting.
Specifically, our methods are not directly applicable when insertions of arbitrary elements can happen after deletions.
We conjecture that the deletion-only settings represents the worst-case in the sense that the expected rank error cannot be made worse even by adversary insertions.
On the flip side, we conjecture that the expected rank error is never better than before the first deletion.
Our reasoning is as follows:
Inserting sufficiently large elements does not affect deletions and is equivalent to inserting before deleting.
Small elements have a good chance of becoming a top-element.
Inserting a few small elements therefore does not alter the distribution of elements to the queues significantly, but decreases the rank errors.
When inserting many small elements, the state drifts towards the state where only insertion happened.
In summary, we expect the state of the MultiQueue always to be ``between'' the insertion-only and the deletion-only setting.

Williams~et~al.\ \cite{williamsEngineeringMultiQueuesFast2021} proposed the \emph{delay} as an additional quality metric and \emph{stickiness} as a way to increase throughput.
The delay of an element $e$ measures how many elements worse $e$ have been deleted after $e$ was inserted.
Stickiness lets threads reuse the same queue for multiple consecutive operations.
We believe that the delay can be analyzed directly with our approach and that the Markov chain can be adapted to handle stickiness as well.

In practice, it is relevant how fast the system stabilizes and converges to the postulated distributions of ranks or what rank errors are to be expected until then.
Thus, analyzing the convergence speed is a natural next step.

Alistarh~et~al.\ \cite{alistarhDistributionallyLinearizableData2018} analyze the MultiQueue in concurrent settings where comparisons can become \emph{stale}, meaning that after deciding which queue to delete from but before actually deleting from it, its top-element might change.
We find it interesting whether our analysis can be adapted to this scenario as well.

\bibliography{references}

\end{document}